\theoremstyle{plain}
\newtheorem{theorem}{Theorem}[section]
\newtheorem{proposition}[theorem]{Proposition}
\newtheorem{lemma}[theorem]{Lemma}
\newtheorem{corollary}[theorem]{Corollary}
\theoremstyle{definition}
\newtheorem{definition}[theorem]{Definition}
\theoremstyle{remark}
\icmltitlerunning{Impossibility of Learning to Cooperate with Adaptive Partners}
\DeclareMathOperator*{\argmin}{arg\,min}
\DeclareMathOperator*{\argmax}{arg\,max}
\theoremstyle{definition}
\newtheorem{example}{Example}
\begin{document}

\twocolumn[
\icmltitle{On the Impossibility of Learning to Cooperate with Adaptive Partner Strategies in Repeated Games}



\icmlsetsymbol{equal}{*}

\begin{icmlauthorlist}
\icmlauthor{Robert Loftin}{delft}
\icmlauthor{Frans A. Oliehoek}{delft}
\end{icmlauthorlist}

\icmlaffiliation{delft}{Department of Intelligent Systems, Delft University of Technology, Delft, South-Holland, The Netherlands}

\icmlcorrespondingauthor{Robert Loftin}{R.T.Loftin@tudelft.nl}

\icmlkeywords{Interactive Learning, Online Learning, Human-AI Interaction, Coordination, Cooperation, Reinforcement Learning}

\vskip 0.3in
]



\printAffiliationsAndNotice{}  

\begin{abstract}
Learning to cooperate with other agents is challenging when those agents also possess the ability to adapt to our own behavior.  Practical and theoretical approaches to learning in cooperative settings typically assume that other agents' behaviors are stationary, or else make very specific assumptions about other agents' learning processes.  The goal of this work is to understand whether we can reliably learn to cooperate with other agents without such restrictive assumptions, which are unlikely to hold in real-world applications.  Our main contribution is a set of impossibility results, which show that no learning algorithm can reliably learn to cooperate with all possible adaptive partners in a repeated matrix game, even if that partner is guaranteed to cooperate with \emph{some} stationary strategy.  Motivated by these results, we then discuss potential alternative assumptions which capture the idea that an adaptive partner will only adapt \emph{rationally} to our behavior.
\end{abstract}

\section{Introduction}
\label{introduction}

Artificial agents deployed in real-world settings must be able to reliably interact and cooperate with humans (as well as other artificial agents).  While machine learning can allow such agents to adapt to others' behavior, learning may fail when other agents are equally capable of adapting to our agent's own behavior. In this work, we consider whether an agent can reliably learn to cooperate with other adaptive agents when the strategies of those agents are unknown, and the minimal assumptions about others' strategies needed to ensure such cooperation.  We show that, even when another agent's strategy is known be ``cooperative'' in a reasonable sense, there is no algorithm which is guaranteed to learn to cooperate without additional assumptions.  At the same time, we argue that the assumptions under which existing approaches can reliably learn to cooperate~\cite{farias2003experts,even2005pomdp} are too restrictive for many realistic settings, and suggest alternative constraints that may be more easily satisfied in practice.

We focus on the case where two agents, who we will refer to as \emph{Alice} and \emph{Bob}, must work together to achieve some common goal.  Our challenge is to choose a ``good'' strategy for Alice, without knowledge of Bob's strategy.  A common way to approach such problems is to assume that Bob will play his half of some optimal joint strategy, and so have Alice play her half of this strategy as well.  This \emph{objective} approach may fail, however, when the task admits multiple, incompatible solutions, and Alice has no way to know in advance which solution Bob has chosen.  For this reason, some works~\cite{mescheder2011pomdp,poland2006aixi} have advocated framing the problem as one of single-agent learning, where Alice's strategy adapts to Bob's observed behavior.  This \emph{subjective} approach becomes problematic, however, when Bob's strategy can also adapt to Alice's behavior.  The key theoretical result of this work in Theorem~\ref{thm:impossibility}, which shows that if Bob's strategy is allowed to be adaptive, then there is \emph{no strategy} Alice can choose that will reliably cooperate with every strategy Bob could follow.

More specifically, we show that for some infinitely repeated matrix games, there is no adaptive strategy Alice can follow that will always achieve the same average payoff as the best strategy she could have followed given knowledge of Bob's strategy.  The key idea is that as the game progresses, Bob may continuously increase the time over which Alice must commit to a single strategy before his behavior converges, such that Alice is never able to gain information about the long-term payoffs of potential strategies.  This stands in contrast to existing results, which guarantee near-optimal payoff under the assumption that there exists some bounded time horizon over which Alice can evaluate possible strategies against Bob's behavior.  We will see that this assumption is violated even by seemingly reasonable learning strategies, such as fictitious play~\cite{robinson1951fictitious} (see Section~\ref{open_ended_strategies}), that Bob could be expected to follow.

In Section~\ref{open_ended_strategies}, we introduce the concept of an \emph{open-ended} strategy, which weakens the assumption of a bounded evaluation horizon, and formalize the problem of no-regret learning against an adaptive, open-ended partner.  In Sections~\ref{passive_algorithms} and~\ref{active_algorithms}, we prove results for two restricted classes of adaptive strategies Alice could follow, while in Section~\ref{general_case} we prove our main impossibility result.  Finally, in Section~\ref{discussion} we discuss potential alternatives to bounded-horizon assumptions, based on the idea that adaptive elements of Bob's strategy must be \emph{rational} with respect to the shared goal.

\section{Preliminaries}
\label{preliminaries}

We restrict ourselves to the case where Alice must learn to cooperate with an unknown partner (Bob) in an infinitely repeated matrix game.  Let $A$ and $B$ be the \emph{finite} action sets for Alice and Bob respectively.  With a slight abuse of notation, we will denote by $G$ both the $\vert A \vert \times \vert B\vert$ payoff matrix of the stage game, and the infinitely repeated game defined by $G$.  Let $G(a,b) \in [0,1]$ be the payoff Alice receives for taking action $a$ when Bob takes action $b$.  Let $a_n$ and $b_n$ be the actions taken by Alice and Bob in stage $n$, such that $G(a_n, b_n)$ is the payoff received by Alice in stage $n$.  We assume that Alice knows the values of the payoff matrix $G$, and only needs to learn about Bob's strategy.  Our main results assume that Bob's strategy can be chosen arbitrarily, and so we will not explicitly consider Bob's payoffs.

Let $\mathcal{H}_n = (A \times B)^n$ be the set of histories of play of length $n$, and let $\mathcal{H} = \bigcup^{\infty}_{n=0}\mathcal{H}_n$ be the set of all \emph{finite} histories.  Let $\vert h\vert$ be the length of $h$, and let $a(h)$ be Alice's most recent action in $h$.  We overload $h$ to also denote the \emph{event} that history $h \in \mathcal{H}$ occurs during play.  We represent the strategies followed by Alice and Bob as \emph{behavioral strategies}~\cite{shoham2008multiagent}, that is, as functions $\pi: \mathcal{H} \mapsto \Delta(A)$ and $\phi : \mathcal{H} \mapsto \Delta(B)$, where $\Delta(X)$ is the set of probability distributions over $X$.  When Alice implements a learning algorithm $\mathcal{A}$, we let $\mathcal{A}_{G}$ denote the strategy she would follow when playing a specific game $G$.  We let $\mathcal{A}_{G,E}$ denote Alice's strategy when $\mathcal{A}$ additionally depends on a set $E$ of \emph{expert} strategies.  Finally, we assume Alice and Bob's respective strategies are defined for all histories $h \in \mathcal{H}$, including histories that occur with probability zero.

We evaluate strategies based on the infinite-time limit of Alice's expected per-stage payoff.  When Alice follows strategy $\pi$, while Bob follows strategy $\phi$, for history $h \in \mathcal{H}$ we denote Alice's conditional expected average payoff as
\begin{equation}
    \label{eqn:value}
    V_{G}(\pi, \phi \vert h) \! = \! \liminf_{s \rightarrow \infty} \text{E}_{\pi,\phi}\!\!\left[ \frac{1}{s}\!\!\!\ \sum^{\vert h \vert + s}_{n = \vert h\vert + 1}\!\!\! G(a_n, b_n) : h \right].
\end{equation}
For convenience, we let $V_{G}(\pi, \phi) = V_{G}(\pi, \phi \vert \emptyset)$, where $\emptyset$ is the initial (empty) history. The limit inferior is used because the limit of the average payoff may not always exist. 

We define all probabilities and expectations with respect to a common probability space $(\Omega, \mathcal{F})$, where $\Omega = (A \times B)^*$, the set of all \emph{infinite} paths of play.  The history $h_n$ up to stage $n$ is a random variable w.r.t. $(\Omega, \mathcal{F})$, that is, $h_n : \Omega \mapsto \mathcal{H}_n$.  For all $n \in \mathbb{N}$, we let $\mathcal{F}_n \subset \mathcal{F}$ be the smallest $\sigma$-algebra containing all sets $\{\omega \in \Omega : h_n(\omega) = h\}$ for all $h \in \mathcal{H}_n$.  $\mathcal{F}_0, \mathcal{F}_1,\ldots$ is then the natural filtration~\cite{kallenberg1997probability} of the stochastic process $h_0,h_1,\ldots$.  We let $\text{Pr}_{\pi, \phi}\{\xi \;\big\vert\; h\}$, for $\xi \in \mathcal{F}$, denote the probability measure induced by $\pi$ and $\phi$ on $(\Omega, \mathcal{F})$, conditioned on $h \in \mathcal{H}$.  We will omit one or both strategies when they are clear from the context.  

\section{Related Work}
\label{related_work}

Existing frameworks for learning to cooperate with other agents make a wide variety of prior assumptions about those agents' strategies.  Unsurprisingly, these frameworks are forced to make trade-offs between computational and sample efficiency, and robustness to different partners.  At one end of the spectrum lie recent, practical approaches based on deep reinforcement learning, which assume that other agents will act (near-)optimally for the current task.  Under this assumption, the most straightforward approach is to choose a strategy that approximates one component of some optimal joint solution~\cite{foerster2019bad,hu2020Ssad} to this task.  As this may fail when there are multiple, incompatible solutions, some approaches select strategies that perform well against synthetic distributions over near-optimal strategies~\cite{canaan2019diverse,strouse2021collaborating,cui2021klevel}, or against learned models of real agents~\cite{carroll2019overcooked}.  Other techniques attempt to address specific sources of miscoordination, such as task symmetries~\cite{hu2020otherplay}, asymmetric roles~\cite{genter2011role}, and arbitrary conventions~\cite{hu2021obl}.

More general (though less scalable) is the Bayesian approach, modelling the interaction from a single agent's perspective as a POMDP, where the strategies of other agents are components of the latent state~\cite{macindoe2012pomcop,panella2016ipomdp}.  In the rational learning framework~\cite{kalai1993rational}, each agent follows a Bayes-optimal strategy given some prior over the other agent's strategy.  The related \emph{targeted learning} criteria~\cite{powers2004targeted, powers2005finite} simply require that an agent eventually learn to play optimally against any member of some target class of strategies.  A major limitation of these approaches is the difficulty in specifying priors (or target classes) for all agents that admit other agents' own learning strategies.  For example, in rational learning, Alice's prior must assign positive probability to Bob's true belief over Alice's strategy, which in turn depends on Alice's prior itself, leading to a circular dependence that is difficult to satisfy.

At the other end of the spectrum are approaches which treat other agents as simply being part of the (non-stationary) environment itself, and apply single-agent reinforcement learning within the interaction.  These include methods that model the problem as one of learning in a non-episodic POMDP~\cite{mescheder2011pomdp}.  Learning in POMDPs without resets is possible by evaluating potential strategies over an ever-expanding time horizon~\cite{even2005pomdp}.  Similarly, the $\epsilon$-greedy \emph{strategic experts algorithm}~\cite{farias2003experts} and the exponentially weighted \emph{follow or explore} algorithm~\cite{poland2005universal} evaluate expert strategies over expanding horizons.  As we discuss in Section~\ref{flexible_strategies}, all these approaches assume that even when other agents are adaptive, there is still some fixed, or slowly growing time horizon over which strategies can be evaluated. As with targeted and rational learning, the challenge is that these assumptions are generally not satisfied when other agent's implement equally capable learning algorithms.

\section{Adaptive Regret}
\label{adaptive_regret}

A key conceptual challenge in multi-agent learning is determining what constitutes a ``good'' strategy for one agent to follow given our assumptions about the strategies of other agents.  In this work, we implicitly assume that Bob's strategy will be cooperative, and that there is some corresponding strategy Alice could follow which would yield satisfactory, if not optimal, average payoff.  Therefore, a natural choice for Alice is to follow a \emph{no-regret} adaptive strategy that is guaranteed to eventually do as well in expectation as the best fixed strategy Alice could have chosen.  As we are primarily concerned with the case where Bob's strategy is also adaptive, we need a notion of regret that captures Bob's dependence on Alice's previous actions. 
\begin{definition}
    \label{def:adaptive_regret}
    Let $\pi$ and $\phi$ be the strategies followed by Alice and Bob respectively in a repeated game $G$, and let $E$ be a set of alternative \emph{expert} strategies for Alice.  Alice's \emph{adaptive regret}\footnote{\cite{dekel2012policy} provide a related notion of \emph{policy regret} over finite time horizons for deterministic $\phi$.} with respect to the set $E$ is then:
    \begin{equation}
        \label{eqn:adaptive_regret}
        \mathcal{R}^{\text{adapt}}_{G,E}(\pi, \phi) = \max_{e \in E}V_{G}(e, \phi) - V_{G}(\pi, \phi),
    \end{equation}
\end{definition}
The adaptive regret compares the expected average payoff Alice receives when following $\pi$ against the expected average payoff she could have received had she followed the best expert strategy $e \in E$ from the beginning of the game repeated game.  This is in contrast to the more common \emph{external regret}~\cite{hannan1957consistency}, defined as
\begin{equation}
    \label{eqn:external_regret}
    \mathcal{R}^{\text{ext},s}_{G,E} \! = \! \max_{e \in E} \text{E}_{\pi, \phi}\!\!\left[ \frac{1}{s}\!\sum^{s}_{n=1}\! G(e, b_n)\! - G(a_n, b_n) \right],
\end{equation}
which assumes that Bob's actions would remain unchanged had Alice always followed strategy $e$ rather than $\pi$.  Note that, when each $e \in E$ simply takes some fixed action at every stage, we overload $e$ to denote this action as well.  

As Bob's strategy is initially unknown, our goal is to choose a strategy $\pi$ for Alice that minimizes her \emph{worst-case} adaptive regret over all strategies $\phi$ that Bob could follow.  For some sufficiently constrained (e.g., finite) set of experts $E$, we might hope to find an \emph{adaptive} strategy $\pi^*$ which guarantees that Alice suffers no adaptive regret, that is:
\begin{equation}
    \sup_{\phi}\mathcal{R}^{\text{adapt}}_{G,E}(\pi^*, \phi) \leq 0.
\end{equation}
The following example, however, shows that Bob's ability to simultaneously adapt to Alice's behavior as well means that such a strategy $\pi^*$ will not always exist.

\begin{example}
    \label{exp:grim_trigger}
    Consider the game $G$ defined by the matrix
    \renewcommand{\arraystretch}{1.2}
    \begin{center}
        \begin{tabular}{|c|c|c|c|} \hline
                  & $b^1$ & $b^2$ & $b^3$ \\ \hline
            $a^1$ & $2$ & $0$ & $1$ \\ \hline
            $a^2$ & $0$ & $2$ & $1$ \\ \hline
        \end{tabular}
    \end{center}
    and Bob's \emph{grim trigger} strategies $\phi_1$ and $\phi_2$.  $\phi_1$ plays $b^1$ so long as the Alice plays $a^1$, and switches to $b^3$ for all future stages if the she ever plays $a^2$.  $\phi_2$ does the opposite, playing $b^2$ so long as Alice plays $a^2$, and otherwise switching to $b^3$.  Let $E = \{ a^1, a^2 \}$, where $a^1$ and $a^2$ are overloaded to denote the fixed-action strategies that play actions $a^1$ and $a^2$.  Observe that $V_{G}(a^1, \phi_1) = 2$ and $V_{G}(a^2, \phi_2) = 2$, because in both cases Bob never switches to action $b^3$.  There is, however, no strategy $\pi$ which simultaneously achieves $V_{G}(\pi, \phi_1) = 2$ and $V_{G}(\pi, \phi_2) = 2$.  If Alice plays $a^1$ at any point, then $V_{G}(\pi,\phi_2) = 1$, but if she never plays $a^1$, then $V_{G}(\pi, \phi_1) = 1$.  The best Alice can do is to select between strategies $a^1$ and $a^2$ with equal probability, such that $V_{G}(\pi, \phi_1) = V_{G}(\pi, \phi_2) = \frac{3}{2}$, and Alice always suffers an adaptive regret of $2 - \frac{3}{2} \geq 0$. \footnote{Similar grim-trigger examples were provided in~\citep{poland2005universal} and~\cite{dekel2012policy}.}
\end{example}

While such grim-trigger strategies may seem unrealistic in a cooperative setting, it \emph{is} reasonable to imagine that an agent might switch to a sub-optimal ``safety'' strategy if their partner initially fails to behave as they had expected.  To address this issue from a theoretical perspective, we can either: 1) explicitly rule out such strategies, or 2) use a performance metric that does not penalize Alice for early mistakes, even when these mistakes have permanent consequences.  In this section we focus on the former approach, leaving discussion of the later formulation to Section~\ref{open_ended_regret}.

\subsection{Flexible Strategies}
\label{flexible_strategies}

As discussed in Section~\ref{related_work}, there are several learning algorithms which Alice could implement to try to minimize her adaptive regret.  The regret guarantees provided for these algorithms share a common assumption that there is some bounded time horizon over which each expert can be reliably evaluated.  Concretely, the \emph{strategic experts algorithm}~\cite{farias2003experts} can guarantee no adaptive regret under the assumption that Bob's strategy is \emph{flexible} with respect to $E$ in the game $G$.

\begin{definition}
\label{def:flexibility}
Bob's strategy $\phi$ is \emph{flexible} w.r.t. $E$ in game $G$ if there exist constants $\mu_e, \forall e \in E$, $r > \frac{1}{4}$ and $c$ such that, for all histories $h \in \mathcal{H}$
\begin{equation}
    \label{eqn:flexibility}
    \text{E}_{a,b \sim e,\phi}\left[\;\; \left\vert \frac{1}{s}\!\!\!\!\sum^{\vert h\vert + s}_{n = \vert h \vert + 1}\!\!\!\! G(a_n, b_n) - \mu_e \right\vert \; : \; h \right] \leq cs^{-r}
\end{equation}
for each expert $e \in E$.
\end{definition}

Flexibility implies that no matter what actions Alice has taken in the past, there is some $\tau$ such that if Alice follows $e \in E$ for $\tau$ stages, $e$'s average payoff over this period will be a good estimate of the limit of $e$'s average payoff.  Furthermore, flexibility means that this $\tau$ is fixed, and independent of the history $h$.  We can see that the grim-trigger strategies from Example~\ref{exp:grim_trigger} are not flexible, as the value of $\mu_e$ would depend on whether the history $h$ included the triggering condition.  Similar assumptions of a bounded \emph{evaluation horizon} have been used to provide no-regret guarantees for non-episodic POMDPs~\cite{even2005pomdp} and against strategies that depend on only Alice's last $k$ actions~\cite{powers2005finite, dekel2012policy}.  The \emph{follow or explore} algorithm~\cite{poland2005universal} does allow us to weaken this assumption slightly, guaranteeing no-regret when the evaluation horizon grows slower than $O(\vert h\vert^{\frac{1}{4}})$.  

\subsection{Open-Ended Strategies}
\label{open_ended_strategies}

The problem with flexibility (Definition~\ref{def:flexibility}) and related assumptions is that many learning algorithms fail to satisfy them, because their behavior depends on the full history of play.  As an example, consider fictitious play~\cite{robinson1951fictitious}, which plays a best response to the empirical distribution over its partner's previous actions. 
\begin{proposition}
    \label{prop:fictitious_play_flexible}
    Fictitious play is not flexible in all fully cooperative repeated games.
\end{proposition}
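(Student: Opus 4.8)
The plan is to refute flexibility by exhibiting one fully cooperative game $G$ and expert set $E$ for which Bob's fictitious play strategy $\phi$ fails Definition~\ref{def:flexibility}. The guiding intuition is that fictitious play best responds to the empirical distribution of Alice's past actions over the \emph{whole} history, so the number of consecutive stages Alice must commit to a fixed action before Bob's play stabilizes grows linearly in $\vert h\vert$ --- precisely the history dependence that flexibility forbids. Concretely I would take the $2\times 2$ common-payoff coordination game with $G(a^1,b^1)=G(a^2,b^2)=1$ and $G(a^1,b^2)=G(a^2,b^1)=0$, set $E=\{a^1,a^2\}$ to be the two constant-action strategies, and let $\phi$ be fictitious play for Bob (best responding to the empirical frequency of Alice's previous actions) with any fixed initialization of the action counts and any fixed tie-breaking rule.

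First I would pin down the constants the definition promises. Against Alice playing $a^1$ forever from the empty history, the empirical frequency of $a^1$ tends to $1$, so after finitely many stages Bob plays $b^1$ at every stage and the running average payoff converges to $1$; since the flexibility inequality at $h=\emptyset$ forces the conditional expectation to vanish, it forces $\mu_{a^1}=1$ (and symmetrically $\mu_{a^2}=1$).

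Next I would construct a bad family of histories. For each $m$, let $h_m$ be a length-$m$ history in which Alice has played $a^2$ at every stage (a legitimate choice, since Definition~\ref{def:flexibility} quantifies over all $h\in\mathcal{H}$; in fact $h_m$ is reachable). Suppose Alice then follows the expert $e=a^1$ for the next $s:=\lceil m/2\rceil$ stages. Throughout this window the count of $a^2$ (at least $m$) exceeds the count of $a^1$ (at most a fixed constant plus $s-1$) once $m$ is large, so Bob best responds with $b^2$ at each of these stages and every realized payoff equals $G(a^1,b^2)=0$. Since both $e$ and $\phi$ are deterministic given the history, the conditional expectation appearing in~\eqref{eqn:flexibility} equals exactly $\vert 0-\mu_{a^1}\vert=1$. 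Flexibility would then require $1\le c\,s^{-r}$ with $s=\lceil m/2\rceil$ for every $m$, which is impossible because $r>\tfrac14>0$ makes the right-hand side tend to $0$ as $m\to\infty$.

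The only step needing real care is making the argument immune to the details of fictitious play's initialization and tie-breaking; I handle this by taking the evaluation window $s$ to be a constant fraction of $m$ rather than $m$ itself, so that the bounded initial counts are swamped, and by using a deterministic expert, which makes the expectation degenerate and removes any need for probabilistic estimates. Beyond that the argument is essentially arithmetic, so I expect no further obstacles.
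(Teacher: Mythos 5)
Your proposal is correct and follows essentially the same route as the paper's proof: build a history in which Alice has played one action for a long time, then show that fictitious play's inertia (linear in $\vert h\vert$) forces the windowed average payoff of the other expert to stay bounded away from $\mu_e$ while the flexibility bound $c\,s^{-r}$ vanishes. The only differences are cosmetic --- you use a $2\times 2$ coordination game and a limiting argument over $m$, whereas the paper reuses the game of Example~\ref{exp:grim_trigger} and exhibits explicit constants --- and your extra care with initialization and tie-breaking is a slight tightening rather than a new idea.
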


\begin{proof}
    Let $G$ be the repeated game from Example~\ref{exp:grim_trigger}.  If Bob's strategy $\phi^{\text{fp}}$ implements fictitious play, then from any history $h$, $V_{G}(a^2, \phi^{\text{fp}} \vert h) = 2$.  For any positive $c$, $r$, let Alice first play action $a^1$ for $n = 2(2c)^{\frac{1}{r}}$ stages.  If Alice plays $a^2$ for the next $\frac{n}{2}$ stages, she will receive an average payoff of $0$ over this period, since Bob will not switch to $b^2$ until Alice plays $a^2$ for $n+1$ steps.  The error in Alice's estimate of $a^2$'s average long-term payoff will be $2$.  On the other hand, the error bound $c(\frac{n}{2})^{-r} = \frac{1}{2}$.  Therefore, no constants $c$, $r$ can satisfy Equation~\ref{eqn:flexibility} for all $h$, and so $\phi^{\text{fp}}$ is not flexible.
\end{proof}

Assuming that Bob's strategy is flexible therefore rules out the possibility that Bob is using fictitious play (at least with unbounded memory), and indeed rules out many natural learning algorithms that Bob might implement.  We therefore consider whether it is possible to relax the assumption of a fixed evaluation horizon, and still guarantee that Alice achieves optimal average payoff in the limit.  Rather than assuming that Bob's strategy is flexible with respect a set of experts $E$, we make the weaker assumption that Bob's strategy is \emph{open-ended} with respect to $E$.

\begin{definition}
    \label{def:open_ended}
    Bob's strategy $\phi$ is \emph{open-ended} w.r.t. experts $E$ in game $G$ if, $\forall e \in E$, there exists $\mu_e$ such that
    \begin{equation}
        \label{eqn:open_ended}
        V_{G}(e,\phi \vert h) = \mu_e
    \end{equation}
    for all histories $h \in \mathcal{H}$.
\end{definition}

The following propositions (proven in Appendices~\ref{apx:open_ended_flexible} and~\ref{apx:fictitious_play_open_ended}) show that being open-ended is a strictly weaker condition than flexibility, and that it allows for learning strategies which are not always flexible.

\begin{proposition}
    \label{prop:open_ended_flexible}
    In any repeated game $G$, if Bob's strategy is flexible with respect to a set of experts $E$, then it is also open-ended w.r.t. $E$.
\end{proposition}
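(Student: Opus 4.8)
The plan is to derive the open-ended property directly from the flexibility bound, reusing the very same constants $\mu_e$. I would fix an arbitrary expert $e \in E$ and an arbitrary history $h \in \mathcal{H}$, let $\mu_e, c, r > \frac14$ be the constants guaranteed by Definition~\ref{def:flexibility}, and write $X_s = \frac{1}{s}\sum_{n=|h|+1}^{|h|+s} G(a_n, b_n)$ for the (random) average payoff of $e$ over the $s$ stages following $h$, so that $V_{G}(e, \phi \mid h) = \liminf_{s\to\infty}\text{E}_{e,\phi}[X_s : h]$ by the definition of $V_{G}$.

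The key step is to pass from the bound on the \emph{expected absolute deviation} in Equation~\ref{eqn:flexibility} to a bound on the \emph{deviation of the expectation}. Since $\mu_e$ is a constant, $\text{E}_{e,\phi}[X_s : h] - \mu_e = \text{E}_{e,\phi}[X_s - \mu_e : h]$, so by Jensen's inequality (equivalently, $|\text{E}[Y]| \le \text{E}[|Y|]$ applied to the conditional expectation),
\[
\bigl| \text{E}_{e,\phi}[X_s : h] - \mu_e \bigr| \;\le\; \text{E}_{e,\phi}\bigl[\, |X_s - \mu_e| : h \,\bigr] \;\le\; c\,s^{-r}.
\]
Because $r > \frac14 > 0$, the right-hand side vanishes as $s \to \infty$, so $\text{E}_{e,\phi}[X_s : h]$ in fact converges to $\mu_e$; in particular its $\liminf$ equals $\mu_e$, i.e.\ $V_{G}(e, \phi \mid h) = \mu_e$. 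Since $e \in E$ and $h \in \mathcal{H}$ were arbitrary, $\phi$ is open-ended w.r.t.\ $E$ with these same values $\mu_e$, which is exactly what Definition~\ref{def:open_ended} requires.

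I do not expect any real obstacle here: the argument is essentially a one-line application of $|\text{E}[Y]| \le \text{E}[|Y|]$ followed by letting $s \to \infty$. The only points worth a moment's care are (i) carrying over the \emph{same} $\mu_e$ from flexibility, so that the resulting value is genuinely history-independent, and (ii) observing that the flexibility bound yields outright convergence of $\text{E}_{e,\phi}[X_s : h]$ — strictly more than the $\liminf$ appearing in the definition of $V_{G}$ — so the equality $V_{G}(e,\phi\mid h) = \mu_e$ needs no extra argument. It is also worth noting that the exponent condition $r > \frac14$ from Definition~\ref{def:flexibility} is far more than needed; any $r > 0$ suffices for this implication, which is consistent with the claim that open-endedness is a strictly weaker condition.
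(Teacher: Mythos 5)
Your proof is correct and follows essentially the same route as the paper's: both bound the deviation of the conditional expectation by the expected absolute deviation from Definition~\ref{def:flexibility} and let $s \to \infty$ to conclude $V_G(e,\phi \mid h) = \mu_e$ for every $h$. The only cosmetic difference is that the paper writes the upper and lower one-sided bounds separately before taking the $\liminf$, whereas you combine them into a single absolute-value inequality.
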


\begin{proposition}
    \label{prop:fictitious_play_open_ended}
    For all fully cooperative, finite-action repeated games $G$, fictitious play is open-ended w.r.t. the set of all fixed-action strategies.
\end{proposition}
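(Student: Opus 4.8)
The plan is to exhibit, for each fixed-action expert $e \in A$, an explicit constant $\mu_e := \max_{b \in B} G(e,b)$ and to show that $V_G(e, \phi^{\text{fp}} \vert h) = \mu_e$ for \emph{every} history $h \in \mathcal{H}$, which is exactly Definition~\ref{def:open_ended}. The intuition is that once we condition on $h$ and let Alice play the fixed action $e$ forever afterwards, the empirical distribution of her actions converges to the point mass $\delta_e$; because fictitious play best-responds to this empirical distribution, Bob's play is eventually confined to $\argmax_{b} G(e,b)$, and in a fully cooperative game every action in that set hands Alice exactly $\max_b G(e,b)$. The bounded-length prefix $h$ perturbs only finitely many stages, so it cannot affect the limiting average payoff.

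First I would fix an arbitrary $h$, condition on its occurrence, and suppose Alice plays $e$ at every stage $n > \vert h \vert$. Let $\bar a_n \in \Delta(A)$ denote the empirical frequency of Alice's actions through stage $n$ (together with whatever finite initial weights the fictitious-play rule may use). For $n > \vert h \vert$ we can write $\bar a_n$ as a convex combination of a fixed distribution, carrying weight $O(1/n)$, with $\delta_e$ carrying weight $1 - O(1/n)$; hence $\| \bar a_n - \delta_e \|_1 \to 0$. This step is pure bookkeeping and uses no property of $G$.

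Next I would make the best-response collapse precise. Since $G$ is fully cooperative, at stage $n$ Bob plays an action maximizing $f_b(x) := \text{E}_{a \sim x}[G(a,b)] = \sum_a x(a) G(a,b)$ at $x = \bar a_n$, and each $f_b$ is $1$-Lipschitz in $x$ with respect to $\| \cdot \|_1$ because $G \in [0,1]$. Put $\delta := \min\{\,\mu_e - G(e,b) : b \notin \argmax_{b'} G(e,b')\,\} > 0$ (the condition is vacuous, and the conclusion immediate, if every action of Bob is a best response to $\delta_e$). A short estimate then shows that $\| x - \delta_e \|_1 < \delta/2$ forces every maximizer of $f_{\cdot}(x)$ into $\argmax_b G(e,b)$. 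Combined with the previous step, there is an $N$ (depending on $h$) such that for all $n > N$ the support of $\phi^{\text{fp}}(h_n)$ lies in $\argmax_b G(e,b)$, so $G(a_n,b_n) = G(e,b_n) = \mu_e$ deterministically. Consequently $\frac{1}{s} \sum_{n=\vert h\vert+1}^{\vert h\vert+s} G(a_n,b_n) \to \mu_e$; taking expectations and the $\liminf$ yields $V_G(e,\phi^{\text{fp}} \vert h) = \mu_e$, and since $h$ and $e$ were arbitrary the proposition follows.

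The main thing to be careful about is ties in Bob's best-response correspondence: open-endedness demands the \emph{exact} equality $V_G(e,\phi^{\text{fp}} \vert h) = \mu_e$, so I must both rule out perturbations of $\delta_e$ pulling in a spurious best response that pays Alice strictly less than $\mu_e$ (this is what the gap $\delta$ buys) and make the argument indifferent to however fictitious play randomizes among equally good responses. Full cooperation is essential here --- it is what guarantees that all of Bob's surviving best responses are payoff-equivalent for Alice --- and the argument genuinely fails for general-sum $G$.
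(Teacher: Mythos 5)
Your proposal is correct and follows essentially the same route as the paper's proof: both define $\mu_e = \max_{b} G(e,b)$, track the empirical weight on $e$ (which grows as $s/(\vert h\vert + s)$), and use the minimum suboptimality gap of Bob's responses to $e$ to show the fictitious-play best response is eventually confined to $\argmax_b G(e,b)$, forcing exact convergence of the average payoff. The only difference is cosmetic --- you package the elimination step as a $1$-Lipschitz estimate in $\ell_1$ while the paper computes the payoff difference directly to get an explicit stage bound $\lceil \vert h\vert/\epsilon(e)\rceil + 1$.
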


When Bob's strategy is open-ended, if at any point Alice commits to following a single expert $e \in E$ for all future stages, then she is guaranteed to receive an average payoff of $\mu_e$ in expectation.  Our main result, however, shows that this fact alone does not allow for an algorithm which is guaranteed to do as well in expectation as the best expert in $E$ against all possible open-ended strategies.

\begin{theorem}
    \label{thm:impossibility}
    For any $N \geq 3$, there exists an $N \times N$ repeated game $G$, with a finite set of experts $E$, such that, for any learning algorithm $\mathcal{A}$, and any $0 < \delta < \frac{N-2}{N}$, we can construct a strategy $\phi$ such that
    \begin{equation}
        \label{eqn:impossibility}
        \mathcal{R}^{\text{\emph{adapt}}}_{G,E}(\mathcal{A}_{G,E}, \phi) \geq \frac{1}{2}\left[\frac{N-2}{N} - \delta \right]^2
    \end{equation}
    and such that $\phi$ is open-ended w.r.t. $E$ in $G$.
\end{theorem}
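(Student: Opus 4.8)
The plan is to exhibit an explicit $N\times N$ coordination game and then defeat $\mathcal{A}$ by a diagonalization that builds $\phi$ after inspecting $\mathcal{A}$'s behavior. For the game I would take a generalization of Example~\ref{exp:grim_trigger}: several of Alice's actions are ``coordination'' actions, each yielding payoff $1$ when Bob responds with the matching action and $0$ otherwise, together with one or two distinguished ``safety'' actions yielding an intermediate payoff independent of the opponent; $E$ would be the set of fixed-action strategies corresponding to the coordination actions. The intended reading is that Bob is willing to coordinate on exactly one of the coordination points but never tells Alice which, falling back on the safety payoff until Alice demonstrates sufficient commitment.

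Concretely, I would define a family of partner strategies $\phi_{i^\ast,(T_k)}$ indexed by a secret target $i^\ast$ and an increasing threshold sequence $T_1<T_2<\cdots$: Bob plays his safety action as long as Alice's most recent play does not form an uninterrupted block of $a^{i^\ast}$'s of length the current threshold, and plays $b^{i^\ast}$ (payoff $1$) exactly when it does, advancing the threshold to the next (much larger) value each time Alice's commitment lapses. I would then verify that every such $\phi_{i^\ast,(T_k)}$ is open-ended w.r.t.\ $E$: from \emph{any} finite history $h$, playing $a^{i^\ast}$ forever completes a block after finitely many additional stages and then yields payoff $1$ forever, so $V_{G}(a^{i^\ast},\phi\mid h)=1$ for all $h$, whereas playing any other coordination action forever leaves Bob stalling from every history, so its value is the safety payoff from every history. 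This is exactly the gap between open-endedness and flexibility (Definition~\ref{def:flexibility}): the limit of the average payoff is insensitive to the arbitrarily long stalling prefix, but the length of that prefix is unbounded over histories.

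The core of the argument is the choice of $i^\ast$ and $(T_k)$. Before any threshold block is completed, Bob's play is the constant safety action regardless of $i^\ast$, so $\mathcal{A}$'s behavior in this regime is a single fixed stochastic process independent of the target, which I would analyze once. To obtain payoff $1$ against \emph{some} $\phi_{i^\ast,(T_k)}$ Alice must essentially commit to a \emph{single} action forever, because choosing $(T_k)$ to grow faster than any exploration schedule $\mathcal{A}$ exhibits ensures that any bounded or slowly-growing block of $a^{i^\ast}$'s fails to reach the threshold; the slack $\delta$ is spent here, calibrating that growth rate. Since Alice cannot know which coordination action Bob chose, this reduces to a guessing game: whatever (possibly randomized) action $\mathcal{A}$ settles on, an adversarial $i^\ast$ among the roughly $N-2$ remaining coordination actions makes her wrong with high probability, so her long-run payoff is essentially the safety payoff while $\max_{e\in E}V_{G}(e,\phi)=\mu_{a^{i^\ast}}=1$. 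Transferring the conclusion from the stalling process back to the true $\phi_{i^\ast,(T_k)}$ is legitimate precisely because the two are observationally indistinguishable to Alice until a threshold block is completed, which by construction happens with vanishing frequency.

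The step I expect to be the main obstacle is turning ``Alice must guess among the coordination actions'' into the precise bound $\tfrac12\big(\tfrac{N-2}{N}-\delta\big)^2$. Because $\mathcal{A}$ may be randomized and fully history-dependent, this is really a minimax computation over the distribution of ``commitment effort'' that $\mathcal{A}$ induces across the $N$ actions against an adversarial target, and one must set up the right finite reduction so that the value of that game lower-bounds to the stated expression — with the $N$-dependence coming from the number of coordination actions and the safety payoff, the square coming from the value of the resulting effort-versus-target matrix game, and the construction degrading continuously as $\delta\to 0$ rather than only yielding an unspecified positive bound. Everything else — the game itself, the open-endedness check, and the observational-equivalence transfer — is routine once the threshold schedule is fixed.
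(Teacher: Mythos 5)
Your proposal takes a genuinely different route from the paper. The paper splits the analysis into two cases --- algorithms that a.s.\ converge to a fixed action (defeated by a ``switching'' partner that only mimics Alice after a convergence deadline $\tau$, Theorem~\ref{thm:passive_regret}) and algorithms that a.s.\ never converge (defeated by a predictive partner that, in each interval, stalls just long enough that Alice will already have switched experts, Theorem~\ref{thm:active_regret}) --- and then, for a general algorithm with non-convergence probability $\gamma$, balances the bounds $\frac{N-2}{N}-\gamma-\delta$ and $\gamma[\frac{N-2}{N}-\delta]$ over $\gamma$; the factor $\frac{1}{2}[\cdot]^2$ is exactly the value at the crossing point, an artifact of using two separate constructions. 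Your single escalating-threshold strategy tries to handle both behaviors at once, and I believe it can: the events ``Alice eventually plays $a^{j}$ forever'' are disjoint across $j$, so under the reference (all-stalling) process some target $i^\ast$ carries commitment probability at most $1/\vert E\vert$, and the thresholds absorb everything else. Carried out correctly, this appears to give a \emph{linear} bound of roughly $1-\frac{1}{\vert E\vert}-\delta$ (minus the stalling payoff), which is stronger than the stated quadratic one --- and it also means your guess that the square comes from ``the value of the effort-versus-target matrix game'' is mistaken.

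That said, there is a genuine gap exactly at the step you flag, and it is not merely bookkeeping. First, ``choosing $(T_k)$ to grow faster than any exploration schedule $\mathcal{A}$ exhibits'' cannot be taken literally: a randomized $\mathcal{A}$ can produce blocks of $a^{i^\ast}$ whose lengths have unbounded support, so no deterministic schedule dominates them. The statement you need is probabilistic: under the reference process, the length of the $k$-th maximal block of $a^{i^\ast}$'s, \emph{conditioned on being finite}, is an a.s.\ finite random variable, so by continuity of measure (dominated convergence) there is a $T_k$ it exceeds with probability at most $\delta 2^{-(k+1)}$. This is precisely the content of Lemmas~\ref{lem:tau_delta} and~\ref{lem:sigma_delta} in the paper, and without it your claim that slowly-growing commitment ``fails to reach the threshold'' does not cover heavy-tailed exploration. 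Second, the reduction from ``Alice cannot know which action Bob chose'' to a quantitative bound is not a minimax computation at all: summing $\text{Pr}\{k\text{-th block} \geq T_k\} \leq \text{Pr}\{k\text{-th block infinite}\} + \delta 2^{-(k+1)}$ over $k$ and using disjointness gives $\text{Pr}\{\text{any threshold is ever met}\}\leq q_{i^\ast}+\delta \leq \frac{1}{\vert E\vert}+\delta$ directly. Third, the ``observational equivalence'' transfer needs to be stated as the two measures agreeing on every finite history preceding the first threshold completion (the analogue of the paper's observation that $\phi^{\text{uni}}$ and $\phi^{\tau}_e$ assign equal probabilities to the relevant histories). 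With these three pieces supplied your argument closes and yields the theorem with room to spare; as written, the proposal establishes the game, the open-endedness of $\phi$, and the right adversary, but not the regret bound itself.
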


Theorem~\ref{thm:impossibility}, which we will prove in Section~\ref{general_case}, means that in some games, no matter what strategy Alice follows, there will always be an open-ended strategy Bob can choose such that Alice's expected average payoff is less than that of the best expert.  Furthermore, for a large enough game and small enough value of $\delta$, Alice's regret can be made arbitrarily close to $\frac{1}{2}$ (recall that payoffs are bounded in $[0,1]$).  Here $\delta$ is a parameter of the strategy $\phi$ that can be chosen arbitrarily, and is an artifact of our proof technique.

\section{Impossibility Results}
\label{impossibility}

To prove Theorem~\ref{thm:impossibility}, it will be sufficient to show that no-regret learning against an open-ended strategy is impossible in an $N \times N$ coordination game $G$, where $G(a, b) = 1$ if $a = b$ and $G(a, b) = 0$ otherwise.  The set $E$ will consist of $N$ fixed-action strategies, where again with a slight abuse of notation we denote by $e \in E$ the expert that in each stage takes action $e \in \{1,\ldots,N\}$.  We will demonstrate that for any algorithm $\mathcal{A}$ that Alice could choose, it is always possible for Bob to choose a strategy that is open-ended, but nonetheless prevents Alice's strategy $\mathcal{A}_{G,E}$ from doing as well as the best expert $\bar{e}$ in $E$.  Before proving Theorem~\ref{thm:impossibility}, we will first prove that that no-regret learning is impossible for two special classes of algorithms.  These are algorithms that are either \emph{passive}, and eventually commit to following a fixed expert, or \emph{active}, and never stop exploring.

\subsection{Passive Algorithms}
\label{passive_algorithms}

By definition, when Bob's strategy is open-ended, if Alice commits to a fixed expert $e$ for all future stages, then the limit of her expected average payoff will be $\mu_e$.  Therefore, a potential approach to reliable learning against an open-ended partner is to eventually commit to following the best expert so far for all future stages.  We will show, however, that such \emph{passive} strategies can always suffer large regret against certain open-ended strategies.

\begin{definition}
    \label{def:passive}
    We say that a learning algorithm $\mathcal{A}$ is \emph{passive almost surely} if, for any game $G$ and set of experts $E$, when Alice plays $\mathcal{A}_{G,E}$ against any strategy $\phi$, we have that
    \begin{equation}
        \label{eqn:passive}
        \text{Pr}\left\{ \exists s, \forall t > s, a_t = a_s \right\} = 1
    \end{equation}
\end{definition}

If Alice's strategy $\mathcal{A}_{G,E}$ is passive a.s., then with probability $1$ there will be some stage $s$ after which Alice will follow the same expert in every stage.  In this section, we will show that for any passive algorithm $\mathcal{A}$ Alice could implement, we can construct a strategy $\phi$ for the coordination game such that Alice will suffer non-zero regret for playing $\mathcal{A}_{G,E}$ against $\phi$.  Our main tool for dealing with passive algorithms will be Lemma~\ref{lem:tau_delta}, which captures the intuition that if $\mathcal{A}$ is passive a.s., then there must be some finite number of stages within which Alice will commit to always following a single expert with high probability.

\begin{lemma}
    \label{lem:tau_delta}
    Assume that, when Alice plays strategy $\mathcal{A}_{G,E}$ against Bob's strategy $\phi$, we have
    \begin{equation}
        \text{\emph{Pr}}\left\{ \exists s, \forall t > s, a_t = a_s \right\} = 1 - \gamma
    \end{equation}
    for some $\gamma \in [0,1)$. Then for any $\delta > 0$, there exists $\tau$ s.t.
    \begin{equation}
        \label{eqn:tau_delta}
        \text{\emph{Pr}}\left\{ \exists s > \tau, a_s \neq a_\tau \right\} \leq \gamma + \delta
    \end{equation} 
\end{lemma}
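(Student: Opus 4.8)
The plan is to reduce the lemma to continuity from below of the probability measure, applied to a suitably chosen increasing sequence of events. First I would define, for each $n \in \mathbb{N}$, the event $D_n = \{\omega \in \Omega : \forall t > n,\ a_t(\omega) = a_n(\omega)\}$, i.e., the event that Alice's action at stage $n$ agrees with all of her later actions — that she has ``committed'' by stage $n$. Each $D_n$ lies in $\mathcal{F}$ since it is the countable intersection $\bigcap_{t>n}\{a_t = a_n\}$ of events determined by the $\mathcal{F}$-measurable random variables $a_t$.

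Second, I would establish two structural facts about this sequence. (i) It is increasing: if $\omega \in D_n$ then in particular $a_{n+1}(\omega) = a_n(\omega)$, and for every $t > n+1$ we have $a_t(\omega) = a_n(\omega) = a_{n+1}(\omega)$, so $\omega \in D_{n+1}$; hence $D_n \subseteq D_{n+1}$. (ii) Its union is exactly the commitment event in the hypothesis, $\bigcup_n D_n = \{\exists s, \forall t > s, a_t = a_s\}$, by matching the existential witness $s$ with the index $n$ in both directions.

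Third, continuity from below of $\Pr$ applied to $D_0 \subseteq D_1 \subseteq \cdots$ gives $\Pr\{D_n\} \uparrow \Pr\{\bigcup_n D_n\} = 1 - \gamma$. Therefore, for any $\delta > 0$ there is some $\tau \in \mathbb{N}$ with $\Pr\{D_\tau\} \ge 1 - \gamma - \delta$. Finally I would observe that the event in the conclusion, $\{\exists s > \tau,\ a_s \ne a_\tau\}$, is precisely the complement of $D_\tau$, so $\Pr\{\exists s > \tau, a_s \ne a_\tau\} = 1 - \Pr\{D_\tau\} \le \gamma + \delta$, which is the desired bound.

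I do not expect a genuine obstacle here — once the events $D_n$ are chosen, the argument is a one-line invocation of a standard measure-theoretic fact. The only points requiring care are (a) verifying that the $D_n$ really are nested (the naive choice $\{a_{n+1} = a_n\}$ is \emph{not} monotone and would not work), and (b) correctly identifying the conclusion's event as the complement of $D_\tau$ itself rather than of some earlier $D_n$.
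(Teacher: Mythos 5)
Your proof is correct and takes essentially the same route as the paper's: the paper defines indicator variables $c_s$ of exactly your events $D_s$ and passes the limit through the expectation of $1 - c_s$ via dominated convergence, which for a monotone sequence of indicators is precisely the continuity-from-below argument you invoke. The only cosmetic difference is that you make the nesting $D_n \subseteq D_{n+1}$ explicit, whereas the paper leaves it implicit in the identity $\{\lim_{s\to\infty} c_s = 1\} = \bigcup_{s}\{c_s = 1\}$.
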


The proof of Lemma~\ref{lem:tau_delta} is provided in Appendix~\ref{apx:tau_delta}.  Note that Lemma~\ref{lem:tau_delta} is more general than we need for the passive case, and in Section~\ref{general_case} we will use it to prove bounds that apply to any algorithm.  For passive algorithms, we can use Lemma~\ref{lem:tau_delta} with $\gamma = 0$ to prove the following:

\begin{theorem}
    \label{thm:passive_regret}  
    Assume that algorithm $\mathcal{A}$ is passive a.s., then for any $N \geq 1$, there exists an $N \times N$ game $G$ and set of experts $E$ such that, for any $\delta > 0$, we can construct a strategy $\phi$ such that
    \begin{equation}
        \label{eqn:passive_regret}
        \mathcal{R}^{\text{\emph{adapt}}}_{G,E}(\mathcal{A}_{G,E}, \phi) \geq \frac{N - 2}{N} - \delta,
    \end{equation}
    with $\phi$ open-ended w.r.t. $E$ in $G$.
\end{theorem}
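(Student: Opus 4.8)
The plan is to use the $N \times N$ coordination game described at the start of Section~\ref{impossibility}, where $G(a,b) = 1$ iff $a = b$, with $E$ the set of $N$ fixed-action strategies. Against this game, Bob's strategy $\phi$ will be built to always be open-ended with $\mu_e = 1$ for every $e \in E$: concretely, $\phi$ should mimic something like fictitious play (or any open-ended rule that eventually matches whatever action Alice commits to), so that if Alice ever settles permanently on expert $e$, her limiting average payoff is $1$ — hence $\max_{e} V_G(e,\phi) = 1$. The goal is then to force Alice's actual value $V_G(\mathcal{A}_{G,E}, \phi)$ down to roughly $\frac{2}{N}$, giving regret at least $\frac{N-2}{N} - \delta$.

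The core of the argument is to exploit passivity via Lemma~\ref{lem:tau_delta} with $\gamma = 0$: for the given $\delta > 0$ there exists a finite $\tau$ such that, with probability at least $1 - \delta$, Alice has already locked onto her final action $a_\tau$ by stage $\tau$. The first step is to fix this $\tau$. Next I would observe that, since $\tau$ is finite, for each of the $N$ possible committed actions $e$, Bob — who knows $\mathcal{A}$ — can compute in advance the probability $p_e = \text{Pr}\{a_\tau = e\}$ that Alice commits to $e$; these sum to essentially $1$ (up to the $\delta$ slack). By a pigeonhole/averaging argument there must be some action $e^\star$ (in fact at least two actions, which is where the ``$N-2$'' rather than ``$N-1$'' comes from — we need room to pick a ``bad'' target) such that $p_{e^\star}$ is small. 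Bob then designs $\phi$ so that, up through stage $\tau$ and beyond, he plays in a way that is consistent with being open-ended toward \emph{all} experts, but his behaviour is ``aimed'' so that the expert $\bar e = e^\star$ which Alice is unlikely to have committed to is nonetheless the one that achieves $\mu_{\bar e} = 1$. Since Alice committed to the wrong action with probability $\geq 1 - p_{e^\star} - \delta \geq \frac{N-1}{N} - \delta$ (roughly), and a wrong committed action earns limiting payoff $0$ in the coordination game against an open-ended-but-adversarially-aimed Bob, while the correct one earns $1$, Alice's expected limiting value is at most about $p_{e^\star} + \delta \leq \frac{1}{N} + \delta$, yielding the claimed bound after bookkeeping. (The precise constant $\frac{N-2}{N}$ should drop out once the counting is done carefully, accounting for the $\delta$ from Lemma~\ref{lem:tau_delta} and the fact that Bob needs at least one ``spare'' action to serve as a non-committed safe target.)

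The delicate point — and the one I'd expect to be the main obstacle — is making $\phi$ simultaneously (i) open-ended with $V_G(e, \phi \mid h) = \mu_e$ for \emph{every} history $h$ and every expert $e$, and (ii) adversarially tuned so that the target expert $\bar e$ Alice is unlikely to choose is the one with $\mu_{\bar e} = 1$ while Alice's actual committed action yields low payoff. These pull in opposite directions: open-endedness is a strong ``for all histories'' requirement, so $\phi$ cannot simply punish Alice for having committed to the wrong action. The resolution, I believe, is that in a pure coordination game $\phi$ can be open-ended with $\mu_e = 1$ for all $e$ by always converging to imitate the opponent's committed action — so the value $\mu_e$ is $1$ regardless — but the \emph{rate} of that convergence can be made arbitrarily history-dependent (this is exactly the non-flexibility phenomenon from Proposition~\ref{prop:fictitious_play_flexible}). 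So Bob does not actually achieve $\mu_{\bar e} = 1$ by singling out $\bar e$; rather \emph{every} expert has $\mu_e = 1$, but Alice, being passive, locks in her action at the finite time $\tau$ that Bob has already anticipated, and Bob then delays matching long enough (on a schedule depending on the observed history) that — wait, but if $\mu_e = 1$ for the committed $e$ too, Alice's value is also $1$.

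So the actual mechanism must instead use the \emph{infinite} repetition more cleverly: Bob does \emph{not} make the coordination game symmetric in the committed action. The correct plan, then, is: Bob commits internally, before play, to a target action $\bar e$ chosen (using knowledge of $\mathcal{A}$ and $\tau$) to be one Alice is unlikely to have settled on by $\tau$; Bob plays an open-ended strategy whose long-run behaviour converges to $\bar e$ \emph{no matter what Alice does}, but whose short-run behaviour tracks the opponent just enough to keep $V_G(e,\phi\mid h)$ equal to a fixed $\mu_e$ for each $e$. For the coordination game this forces $\mu_e = 1$ only for $e = \bar e$ and $\mu_e = 0$ (or some fixed small value) for $e \neq \bar e$ — but then open-endedness for $e \ne \bar e$ requires $V_G(e,\phi\mid h)=0$ from every history, which holds since Bob's tail is $\bar e$. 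Then $\max_e V_G(e,\phi) = \mu_{\bar e} = 1$, whereas Alice, having committed (w.p. $\geq 1 - p_{\bar e} - \delta$) to some $e \ne \bar e$, gets limiting value $0$; her overall expected value is $\le p_{\bar e} + \delta$. Choosing $\bar e$ to minimize $p_{\bar e}$ and noting $\sum_e p_e \le 1$, with the extra slack needed to leave $\bar e$ genuinely unlikely, gives $V_G(\mathcal{A}_{G,E},\phi) \le \frac{2}{N} + \delta'$ and hence the regret bound $\frac{N-2}{N} - \delta$. The one genuinely technical step is verifying that such a $\phi$ (converging to a fixed tail action while keeping all expert-values history-independent) actually exists in the coordination game — I would exhibit it explicitly, e.g. as a rule that plays $\bar e$ except on a sparse, history-determined set of stages on which it briefly imitates Alice, with the sparse set having density zero so it does not affect any $\mu_e$.
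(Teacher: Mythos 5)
Your overall skeleton matches the paper's: the $N\times N$ coordination game, Lemma~\ref{lem:tau_delta} with $\gamma=0$ to extract a commitment time $\tau$, a pigeonhole argument over the commitment probabilities $p_e$, and a Bob who targets the least likely action $\tilde e$. You also correctly reject your first construction (mimicking Bob with $\mu_e=1$ for all $e$). But your final construction of $\phi$ --- Bob plays (near-)uniformly until $\tau$ and then converges to the constant tail $\bar e$ \emph{no matter what Alice does} --- has a genuine gap. The quantities you rely on, namely the time $\tau$ and the probabilities $p_e$, are guaranteed by Lemma~\ref{lem:tau_delta} only for play against the reference strategy $\phi^{\text{uni}}$; moreover the event ``Alice has converged by stage $\tau$'' is not $\mathcal{F}_\tau$-measurable, since it depends on Alice's entire future, which in turn depends on Bob's future actions. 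Your $\phi$ deviates from $\phi^{\text{uni}}$ after stage $\tau$ in a way that is observable to Alice \emph{regardless of her own actions}, so a passive algorithm can detect $\bar e$ after $\tau$ and re-commit to it. Concretely: let Alice play uniformly until stage $\tau_0$, then play $b_{\tau_0}$ forever, except that if Bob plays some single action $b\neq b_{\tau_0}$ at $L$ consecutive stages within the window $[\tau_0+1,\tau_0+M]$, she switches permanently to $b$ at stage $\tau_0+M+1$. This algorithm is passive a.s.; against $\phi^{\text{uni}}$ it converges by $\tau=\tau_0+1$ with probability $1-MN^{1-L}$ and has $p_e\approx 1/N$ for every $e$; yet against your ``uniform-then-constant-$\bar e$'' strategy the run of $\bar e$'s after $\tau$ triggers the switch and Alice attains value $1$, so her regret is $0$.

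The missing idea is the coupling that the paper builds into the switching strategy $\phi^{\tau}_{\tilde e}$: Bob's deviation from the uniform strategy is made \emph{conditional on Alice's own last action} (he plays $\tilde e$ only when Alice just played $\tilde e$, and uniformly otherwise). Consequently, on the event $\{C_\tau \wedge a_\tau\neq\tilde e\}$ the induced law over infinite play is identical under $\phi^{\text{uni}}$ and $\phi^{\tau}_{\tilde e}$, which is exactly what lets the probability of that event and the conditional payoff transfer from the reference measure to the adversarial one. This also explains the constant: your claim that the $\frac{N-2}{N}$ (rather than $\frac{N-1}{N}$) arises from ``needing a spare target action'' is not the paper's reason --- one factor of $\frac{1}{N}$ is $p_{\tilde e}$, and the other is the residual payoff $\frac{1}{N}$ Alice still collects by chance after mis-committing, because Bob must keep playing uniformly (not payoff $0$) on those histories to preserve the coupling. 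Your proposal, as written, does not establish the theorem.
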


\begin{proof}
Let $G$ be the $N \times N$ coordination game, and let $E$ be the set of fixed-action experts.  Let $\phi^{\text{uni}}$ be the uniform random strategy.  We fix an arbitrary value of $\delta > 0$.  Let $C_s$ denote the event that Alice's actions have converged by stage $s$.  Since $\mathcal{A}$ is passive a.s., by Lemma~\ref{lem:tau_delta} there exists $\tau$ such that when Alice plays $\mathcal{A}_{G,E}$, we have $\text{Pr}_{\phi^{\text{uni}}}\{ \neg C_{\tau} \} \leq \gamma + \delta$ (with $\gamma=0$ since $\mathcal{A}$ is passive a.s.).  Henceforth, let $\tau$ denote the value satisfying Equation~\ref{eqn:tau_delta} for our choice of $\delta$.

We next define the \emph{switching} strategies.  For action $e \in [1, N]$, the switching strategy $\phi^{\tau}_e$ is defined as
\begin{equation}
    \phi^{\tau}_e(h) = 
    \begin{cases}
        e & \vert h \vert \geq \tau \wedge a(h) = e \\
        a \sim [1,N] & \text{otherwise}
    \end{cases}
\end{equation}
The strategy $\phi^{\tau}_e$ follows the uniform strategy for the first $\tau$ stages, and then continues to select actions randomly at each stage unless Alice took action $e$ in the previous stage, in which case $\phi^{\tau}_e$ takes action $e$.  Note that for any $e$, $\phi^{\tau}_e$ is open ended w.r.t. $E$, with $\mu_e = 1$, and $\mu_{e'} = \frac{1}{N}$. The latter is due to the fact that when Alice plays any action other than $e$, $\phi^{\tau}_e$ will sample its next action uniformly at random.

We next define the \emph{convergence probabilities} $p_e$ as
\begin{equation}
    p_e = \text{Pr}_{\phi^{\text{uni}}}\left\{ a_{\tau} = e \big\vert C_{\tau}\right\},
\end{equation}
that is, $p_e$ is the probability that the Alice converges to action $e$ (when playing against $\phi^{\text{uni}}$), given that her actions converge by stage $\tau$.  We then note that
\begin{equation}
    \text{Pr}_{\phi^{\tau}_e} \! \left\{C_{\tau} \! \wedge \! a_{\tau} \! \neq \! e \right\}  \! = \! \text{Pr}_{\phi^{\text{uni}}} \! \left\{C_{\tau} \! \wedge \! a_{\tau} \! \neq \! e \right\}.
\end{equation}
If Alice converges to $e' \neq e$ by stage $\tau$, then for each subsequent stage $s > \tau$, Bob's action distributions will be the same under both strategies, that is, $\phi^{\text{uni}}(h_s) = \phi^{\tau}_e(h_s)$.  So all finite histories $h$ will have equal probabilities under both measures. For the same reason
\begin{equation}
    \lim_{s \rightarrow \infty} \! \text{E}_{\phi^{\tau}_e} \! \left[ \frac{1}{s} \! \sum^{s}_{n=1} \! G(a_n, b_n) \! : \! C_{\tau} \! \wedge \! a_{\tau} \! \neq \! e \right] \! = \! \frac{1}{N}.
\end{equation}
We can then see that
\begin{align}
    \text{Pr}_{\phi^{\tau}_e} \! \left\{ \neg C_{\tau} \! \vee \! a_{\tau} \! = \! e \right\}  \! &= \! 1 - \text{Pr}_{\phi^{\tau}_e} \! \left\{C_{\tau} \! \wedge \! a_{\tau} \! \neq \! e \right\} \\
    &= \! 1 - \text{Pr}_{\phi^{\text{uni}}} \! \left\{C_{\tau} \! \wedge \! a_{\tau} \! \neq \! e \right\} \\
    = \! 1 -& \text{Pr}_{\phi^{\text{uni}}} \! \left\{ a_{\tau} \! \neq \! e \big\vert C_{\tau} \right\} \text{Pr}_{\phi^{\text{uni}}} \! \left\{ C_{\tau} \right\} \\
    = 1 -& (1 - p_e) [1 - \text{Pr}_{\phi^{\text{uni}}} \! \left\{ \neg C_{\tau} \right\}] \\
    \leq 1 -& (1 - p_e) [1 - \gamma - \delta] \\
    \leq \gamma +& \delta + p_e
\end{align}
Define $\tilde{e} \in \argmin_{e \in E}p_e$.  Since $\sum^{N}_{e=1}p_e = 1$, we have that $p_{\tilde{e}} \leq \frac{1}{N}$.  As the maximum average payoff is $1$, we have
\begin{align}
    V_{G}(\mathcal{A}_{G,E}, \phi^{\tau}_{\tilde{e}}) &\leq \gamma \! + \! \delta \! + \! p_{\tilde{e}} \! + \! \frac{1}{N}\text{Pr}_{\phi^{\tau}_{\tilde{e}}} \! \left\{C_{\tau} \! \wedge \! a_{\tau} \! \neq \! \tilde{e} \right\} \\
    &\leq \gamma \! + \! \delta \! + \! p_{\tilde{e}} \! + \! \frac{1}{N} \\
    &\leq \gamma \! + \! \delta \! + \frac{2}{N}
\end{align}
Here, we have decomposed the expected average payoff into the case where Alice converges to a fixed action $a \neq \tilde{e}$, for which we showed that the payoff is at most $\frac{1}{N}$, and the case where she fails to do so (where we can only use the trivial bound of $1$).  Finally, we have that
\begin{align}
    \mathcal{R}^{\text{adapt}}_{G,E}(\mathcal{A}_{G,E}, \phi^{\tau}_{\tilde{e}}) &= \mu_{\tilde{e}} - V_{G}(\mathcal{A}_{G,E}, \phi^{\tau}_{\tilde{e}}) \\
    &\geq 1 -\left[\gamma + \delta + \frac{2}{N}\right] \\
    &= \frac{N - 2}{N} - \gamma - \delta \label{eqn:gamma_passive}
\end{align}
where setting $\gamma = 0$ gives the desired result.
\end{proof}

\subsection{Active Algorithms}
\label{active_algorithms}

As we saw in the previous section, a learning algorithm that eventually commits to a single expert risks missing opportunities for a larger payoff.  In this section we show that the alternative, following an active learning strategy that never stops exploring, can fail in a different way.

\begin{definition}
    \label{def:active}
    We say that a learning algorithm $\mathcal{A}$ is \emph{active almost surely} if, for any game $G$ and set of experts $E$, when Alice plays $\mathcal{A}_{G,E}$ against any strategy $\phi$, we have that
    \begin{equation}
        \label{eqn:active}
        \text{Pr}\left\{ \forall s, \exists t > s,  a_t \neq a_s \right\} = 1
    \end{equation}
\end{definition}

If Alice's strategy $\mathcal{A}_{G,E}$ is active a.s., then for any stage $s$ where the she follows expert $e_s$, there will be a stage $t > s$ where she follows a different expert $e_t \neq e_s$.  An active strategy will never commit to a fixed expert indefinitely, and must eventually deviate from its current expert.

The issue is that if Bob is aware of Alice's strategy, he can predict at which stage Alice is likely to switch from her current expert.  Bob can then require that Alice continue to follow her current expert past this stage in order to learn anything about that expert's long-term payoffs.  Again, let $\phi^{\text{uni}}$ be the uniformly random strategy.  The following lemma shows that we can always predict, with high confidence, when Alice will deviate from her current expert.

\begin{lemma}
    \label{lem:sigma_delta}
    Assume that, when Alice plays strategy $\mathcal{A}_{G,E}$ against Bob's strategy $\phi$, we have
    \begin{equation}
        \text{\emph{Pr}}\left\{\forall s, \exists t > s, a_t \neq a_s \right\} = \gamma,
    \end{equation}
    for some $\gamma > 0$. Let $C$ denote the event that Alice's behavior eventually converges to some fixed action. For any history $h$ with positive probability under $\mathcal{A}_{G,E}$ and $\phi$, and any $\delta > 0$, there exists $\sigma$ such that
    \begin{equation}
        \label{eqn:sigma_delta}
        \text{\emph{Pr}}\left\{ \forall s \in [1, \sigma], a_{\vert h\vert} \! = \! a_{\vert h \vert + s} \;\big\vert\; h \wedge \neg C \right\} \leq \delta 
    \end{equation}
\end{lemma}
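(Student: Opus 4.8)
The plan is to prove Lemma~\ref{lem:sigma_delta} by a contradiction/compactness argument analogous to the proof of Lemma~\ref{lem:tau_delta}, but conditioned on the event $\neg C$ that Alice never converges. First I would fix a history $h$ with positive probability and work throughout under the conditional measure $\text{Pr}\{\,\cdot \mid h \wedge \neg C\}$; note this is well-defined since $\text{Pr}\{\neg C \mid h\} > 0$ whenever $\gamma > 0$ and $h$ has positive probability (one must first check that $\neg C$ has positive conditional probability given $h$ — if it had zero probability the statement is vacuous, since then the left side of \eqref{eqn:sigma_delta} is conditioning on a null event; in the intended application $\gamma = 1$ so $\neg C$ has probability one). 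For each $\sigma$, define the ``stay'' event $S_\sigma = \{\forall s \in [1,\sigma],\ a_{|h|+s} = a_{|h|}\}$, i.e.\ Alice does not deviate from her action at time $|h|$ within the next $\sigma$ stages.

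The key observation is that the events $S_\sigma$ are nested decreasing in $\sigma$, and their intersection $\bigcap_{\sigma \geq 1} S_\sigma$ is exactly the event that Alice plays $a_{|h|}$ forever after time $|h|$, which is a subset of the convergence event $C$. Hence, under the conditional measure given $\neg C$, we have $\text{Pr}\{\bigcap_\sigma S_\sigma \mid h \wedge \neg C\} = 0$. By continuity of measure from above (downward monotone convergence), $\lim_{\sigma \to \infty}\text{Pr}\{S_\sigma \mid h \wedge \neg C\} = \text{Pr}\{\bigcap_\sigma S_\sigma \mid h \wedge \neg C\} = 0$. Therefore, for the given $\delta > 0$, there exists $\sigma$ large enough that $\text{Pr}\{S_\sigma \mid h \wedge \neg C\} \leq \delta$, which is precisely \eqref{eqn:sigma_delta}.

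I would lay out the steps in this order: (i) establish that the conditioning is well-defined (or note the statement is vacuous otherwise); (ii) define the nested sequence of events $S_\sigma$ and verify $S_{\sigma+1} \subseteq S_\sigma$; (iii) identify $\bigcap_\sigma S_\sigma$ with ``Alice plays $a_{|h|}$ at every stage after $|h|$'' and observe this event is contained in $C$, hence is a null set under the measure conditioned on $\neg C$; (iv) invoke continuity from above of the probability measure to conclude $\text{Pr}\{S_\sigma \mid h \wedge \neg C\} \to 0$; (v) extract the required $\sigma$ from the definition of the limit. One subtlety worth spelling out is measurability: each $S_\sigma$ lies in $\mathcal{F}_{|h|+\sigma}$, so it and its limit are in $\mathcal{F}$, and $C = \bigcup_{s}\bigcap_{t>s}\{a_t = a_s\}$ is also measurable, so all the conditioning is legitimate.

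The main obstacle I anticipate is purely the bookkeeping around the conditioning event $\neg C$ rather than anything deep: one must be careful that $\bigcap_\sigma S_\sigma \subseteq C$ (the inclusion, not equality — Alice could converge to $a_{|h|}$ only eventually, not immediately after $h$), and that therefore this intersection is genuinely null under $\text{Pr}\{\cdot \mid \neg C\}$; and one must confirm continuity from above applies, which it does since the measure is finite (a probability measure) so no extra integrability hypothesis is needed. Everything else is a direct application of the standard continuity properties of probability measures, exactly paralleling the structure of Lemma~\ref{lem:tau_delta}.
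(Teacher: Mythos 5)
Your proposal is correct and follows essentially the same route as the paper: both arguments observe that the event of never deviating from $a_{|h|}$ after $h$ is contained in the convergence event $C$, hence null under the measure conditioned on $\neg C$, and then pass from this limiting statement to a finite $\sigma$. The only (cosmetic) difference is that you invoke continuity of measure from above on the nested events $S_\sigma$, whereas the paper phrases the same step as dominated convergence applied to the indicator variables $d^{|h|}_s$; your handling of the well-definedness of the conditioning and of the strict inclusion $\bigcap_\sigma S_\sigma \subseteq C$ is a welcome extra care.
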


The proof of Lemma~\ref{lem:sigma_delta} is analogous to that of Lemma~\ref{lem:tau_delta}, and is provided in Appendix~\ref{apx:sigma_delta}.  Using Lemma~\ref{lem:sigma_delta}, we can now prove the following theorem:

\begin{theorem}
    \label{thm:active_regret}
    Assume that algorithm $\mathcal{A}$ is active a.s., then for any $N \geq 1$, there exists an $N \times N$ game $G$ and set of experts $E$ such that, for any $\delta > 0$, we can construct a strategy $\phi$ such that
    \begin{equation}
        \label{eqn:active_regret} 
        \mathcal{R}^{\text{\emph{adapt}}}_{G,E}(\mathcal{A}_{G,E}, \phi) \geq \frac{N - 1}{N} - \delta
    \end{equation}
    with $\phi$ open-ended w.r.t. $E$ in $G$.
\end{theorem}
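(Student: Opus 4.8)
The plan is to reuse the $N \times N$ coordination game $G$ and the set $E$ of $N$ fixed-action experts, and to exploit the fact that an active algorithm can never ``collect the reward'' that an open-ended strategy offers for committing to an expert (note that $\phi^{\text{uni}}$ alone is useless here, since it gives $\mu_e = \frac{1}{N}$ for every $e$ and hence zero regret). Fix a small $\delta > 0$, and let $\phi^{\text{uni}}$ be the uniformly random strategy. Since $\mathcal{A}$ is active a.s., $\mathcal{A}_{G,E}$ is active a.s. against $\phi^{\text{uni}}$ in particular, so the convergence event $C$ of Lemma~\ref{lem:sigma_delta} has probability $0$ and conditioning on $\neg C$ is vacuous; in what follows I apply Lemma~\ref{lem:sigma_delta} only with $\phi = \phi^{\text{uni}}$.

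I would build Bob's strategy $\phi$ around an increasing sequence of \emph{checkpoints} $T_0 < T_1 < T_2 < \cdots$: $\phi$ plays uniformly at random at every stage until the first checkpoint $T_{j+1}$ (if any) such that Alice played one and the same action throughout the stages $[T_j, T_{j+1}]$, and from that stage on $\phi$ \emph{mirrors} Alice, playing Alice's previous action $a_n$ at every stage $n+1$ thereafter. This is a well-defined behavioural strategy on all of $\mathcal{H}$, and once the mirroring is triggered it stays triggered. The checkpoints are defined recursively: given $T_j$, for each positive-probability history $h \in \mathcal{H}_{T_j}$ (under $\mathcal{A}_{G,E}$ and $\phi^{\text{uni}}$), Lemma~\ref{lem:sigma_delta} applied to $h$ with tolerance $\delta / 2^{j+1}$ produces a finite $\sigma_h$; since $\mathcal{H}_{T_j}$ is finite I set $T_{j+1} = T_j + \max_h \sigma_h$, which guarantees $\text{Pr}_{\phi^{\text{uni}}}\{\text{Alice is constant on } [T_j, T_{j+1}] \mid h\} \le \delta / 2^{j+1}$ for every positive-probability $h \in \mathcal{H}_{T_j}$, hence also unconditionally.

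Two things then need verification. \emph{Open-endedness with $\mu_e = 1$ for every $e \in E$:} from any history $h$, if Alice commits to $e$ for all remaining stages, then either $h$ already contains a trigger (so $\phi$ is already mirroring, and thus plays $e$ at every stage once Alice does) or it does not, in which case for any $j$ with $T_j > \vert h\vert$ Alice is constant on $[T_j, T_{j+1}]$ and $\phi$ triggers at $T_{j+1}$ and mirrors $e$ afterwards; either way Alice's average payoff tends to $1$, so $V_G(e, \phi \mid h) = 1$, and in particular $\max_{e \in E}\mu_e = 1$. \emph{Payoff bound:} the processes $(\mathcal{A}_{G,E}, \phi)$ and $(\mathcal{A}_{G,E}, \phi^{\text{uni}})$ coincide up to the first trigger of $\phi$, and a union bound over the checkpoints gives $\text{Pr}_\phi\{\phi \text{ ever triggers}\} \le \sum_{j \ge 0}\delta / 2^{j+1} = \delta$. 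Under $\phi^{\text{uni}}$ Bob's action at each stage is uniform and independent of Alice's, so $\text{E}_{\phi^{\text{uni}}}[G(a_n, b_n)] = 1/N$ and therefore $\text{E}_{\phi^{\text{uni}}}[\frac{1}{s}\sum_{n=1}^{s}G(a_n, b_n)] = 1/N$ for every $s$; since, for each $s$, the two processes can be coupled to agree on stages $1,\ldots,s$ except on an event of probability at most $\delta$ (on which the average payoff is anyway at most $1$), we get $\text{E}_\phi[\frac{1}{s}\sum_{n=1}^{s}G(a_n, b_n)] \le 1/N + \delta$ for all $s$, so $V_G(\mathcal{A}_{G,E}, \phi) \le 1/N + \delta$. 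Hence $\mathcal{R}^{\text{adapt}}_{G,E}(\mathcal{A}_{G,E}, \phi) = \max_{e \in E}\mu_e - V_G(\mathcal{A}_{G,E}, \phi) \ge 1 - 1/N - \delta = \frac{N-1}{N} - \delta$, as claimed.

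The substance of the argument is in the design of $\phi$, not the estimates. Bob must simultaneously (i) remain open-ended, which forces him to mirror Alice's \emph{current} action forever after triggering rather than locking onto whichever fixed action caused the trigger --- the latter would make $\mu_e$ history-dependent and break open-endedness --- and (ii) make the commitment threshold grow fast enough, by invoking Lemma~\ref{lem:sigma_delta} uniformly over the finitely many histories at each checkpoint, that an active Alice triggers it only with arbitrarily small total probability. The apparent circularity of using Lemma~\ref{lem:sigma_delta} to define $\phi$ is harmless precisely because $\phi$ coincides with $\phi^{\text{uni}}$ up to the first trigger, so every probability used to choose the checkpoints is evaluated under the fixed strategy $\phi^{\text{uni}}$.
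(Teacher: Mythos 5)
Your proof is correct, and it runs on the same engine as the paper's --- the $N\times N$ coordination game with fixed-action experts, the uniform base strategy, Lemma~\ref{lem:sigma_delta} invoked with geometrically decaying tolerances $\delta/2^{j+1}$, and the observation that an active learner will only with total probability $\leq \delta$ ever hold an action long enough to see Bob cooperate --- but the implementation is genuinely different. The paper anchors Bob's prediction horizons to Alice's actual switch times: each time she changes action a new interval begins, Bob plays uniformly for a history-dependent $\sigma_i$ stages and then follows her current action until she next switches, and the payoff bound goes through the event $D_n$ that no interval has outlived its prediction, with an interval-counting union bound over the $R^n_i$. You instead precompute a deterministic sequence of checkpoints by maximizing the $\sigma$ of Lemma~\ref{lem:sigma_delta} over the finitely many positive-probability histories at each checkpoint, and make the reward phase (mirroring) permanent once triggered. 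This buys two things: open-endedness with $\mu_e = 1$ becomes immediate for \emph{every} history $h \in \mathcal{H}$, including zero-probability ones (the paper's $\sigma_{\iota(s)}$ is only supplied by the lemma at positive-probability histories, so its strategy needs a harmless but unstated extension to be fully specified), and the payoff bound collapses to a one-shot coupling with $\phi^{\text{uni}}$ plus a union bound on the trigger probability, avoiding the $D_n$ bookkeeping. What the paper's version buys in exchange is that its bound is stated for general $\gamma = \text{Pr}\{\neg C\}$, which is exactly the form needed to combine with the passive bound in the proof of Theorem~\ref{thm:impossibility}; since you condition away $C$ at the outset, reusing your argument there would require reinstating the decomposition over $C$ and $\neg C$ (which your construction supports, but you would need to say so).
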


Before getting into the proof of Theorem~\ref{thm:active_regret}, we briefly note that the theorem shows how the assumption of a finite evaluation horizon can be exploited.  Imagine Alice is using the \emph{strategic experts algorithm}~\cite{farias2003experts}.  Each time Alice selects an expert to evaluate, she also selects the time-horizon over which she will evaluate that expert.  Because Bob has complete knowledge of Alice's strategy, no matter how large a horizon she selects, Bob can always ensure that Alice learns nothing about the final performance of the current expert over this horizon.  Theorem~\ref{thm:active_regret} alone is sufficient to show that the algorithms presented in~\cite{farias2003experts},~\cite{even2005pomdp}, and~\cite{poland2005universal} can suffer positive adaptive regret against an open-ended partner, as they are each active almost surely. 

\begin{proof}
From the Bob's perspective, we can divide the learning process into \emph{intervals} $i$, such that a new interval starts every time Alice switches to following a different expert ($i=0$ starts at the beginning of the game).  Let $h^i$ be the history up to the stage $s_i$ at which the $i$th interval started.  Assuming that the learner's behavior does not converge (the event $\neg C$), Lemma~\ref{lem:tau_delta} means for each interval $i$, there must exist $\sigma_i$ such that, conditioned on $h^i$, the probability that the Alice will switch experts (and interval $i$ will end) by stage $s_i + \sigma_i$ is greater that $1- \delta$ (for any $\delta > 0$).  

To show that learning is impossible, we define a strategy $\phi_{\mathcal{A}}$\footnote{\citep{chang2001exploiter} define a similar predictive exploiter strategy for hill-climbing learners.} that, in each interval $i$, follows the uniform strategy $\phi^{\text{uni}}$ for the first $\sigma_i$ steps, and then follows expert $e_i$ that the learner is following in $i$ until the interval ends.  Given a sequence $\{\delta_i\}_{i \in \mathbb{N}}$, we define each $\sigma_i$ such that
\begin{equation}
    \text{Pr}\left\{ \forall s \in [1, \sigma_i], a_{\vert h\vert} = a_{\vert h\vert + s} \vert h^i \wedge \neg C \right\} \leq \delta_i
\end{equation} 
Let $\iota(s)$ be the interval in which stage $s$ occurs (note that both $\iota(s)$ and $\sigma_i$ are random variables on $(\Omega, \mathcal{F})$).  Clearly $\phi_{\mathcal{A}}$ is open ended, because if we commit to following any expert $e$ from stage $s$, then after at most $\sigma_{\iota(s)}$ stages, $\phi_{\mathcal{A}}$ will switch to $e$ as well, and so for all $e \in E$, $\mu_e = 1$.

Let $R^{s}_{i}$ denote the total payoff received in interval $i$, truncated to stage $s$ if $s$ occurs during or before interval $i$ (so that $R^{s}_{i} = 0$ if $s_i > s$).  Let $D_s$ be the event that, for all intervals $i$ such that $s_i \leq s$, either $s_{i+1} \leq s_i + \sigma_i$, or $s < s_i + \sigma_i$. More intuitively, $D_s$ is the event that, up to stage $s$, Alice has never followed a single expert for longer than Bob predicted she would.  We then see that
\begin{align}  
    V_{G}(\mathcal{A}_{G,E}, \phi_{\mathcal{A}}) &= \liminf_{n \rightarrow \infty} \text{E} \left[ \frac{1}{n} \sum^{n}_{s=1}G(a_s,b_s)\right] \\
    \leq (1-\gamma) + \gamma &\liminf_{n \rightarrow \infty} \text{E} \left[ \frac{1}{n} \sum^{n}_{s=1}G(a_s,b_s) \; : \; \neg C \right] \label{eqn:converge_decomp} \\
    = (1-\gamma) + \gamma &\liminf_{n \rightarrow \infty} \text{E} \left[ \frac{1}{n} \sum^{\iota(n)}_{i=1} R^{n}_{i} \; : \; \neg C \right] \\
    \leq (1-\gamma) + \gamma&\liminf_{n \rightarrow \infty} \text{E} \left[ \frac{1}{n} \sum^{\iota(n)}_{i=1} R^{n}_{i} : D_n \wedge \neg C \right]  \\
    + \text{Pr}\{ \neg & D_n \;\big\vert\; \neg C \} \text{E} \left[ \sum^{\iota(n)}_{i=1} R^{n}_{i} : \neg D_n \wedge \neg C \right] \label{eqn:violate_decomp} \\
    \leq (1-\gamma) + \gamma &\frac{1}{N} + \gamma\liminf_{n \rightarrow \infty} \text{Pr}\{ \neg D_n \;\big\vert\; \neg C\}
\end{align}
In Equation~\ref{eqn:converge_decomp} we are decomposing the expected payoff into the case where Alice's behavior does and does not converge.  In Equation~\ref{eqn:violate_decomp}, we are decomposing the expectation into two cases: 1) the case $D_n$ where there is no interval up to stage $n$ in which Alice followed $e_i$ for more that $\sigma_i$ stages, and 2) the case $\neg D_n$ where there is such an interval.  In the former case, Bob only ever plays the uniform base strategy $\phi^{\text{uni}}$, yielding an expected average payoff of $\frac{1}{N}$.  In the later case, Alice could observe and adapt to $e^*$, and so the maximum average payoff is $1$.  To bound $\liminf_{n \rightarrow \infty} \text{Pr}\{ \neg D_n \;\big\vert\; \neg C \}$, note that
\begin{align}
    \text{Pr}\{ \neg D_n \big\vert \neg C \} &\leq \text{Pr}\left\{ \bigcup^{n}_{i=1} \{ s_{i+1} > s_i + \sigma_i \} \big\vert \neg C \right\} \\
    &\leq \sum^{n}_{i=1} \text{Pr}\{ s_{i+1} > s_i + \sigma_i \;\big\vert\; \neg C \} \\
    &\leq \sum^{n}_{i=1} \delta_i
\end{align}
where the first inequality uses the fact that there can be at most $n$ intervals observed in the first $n$ stages, and so if we have not observed a violation of the predicted deviation horizon within the first $n$ intervals, we will not have observed it within the first $n$ stages.  Setting $\delta_i = \frac{\delta}{2^{i+1}}$ gives us $\liminf_{n \rightarrow \infty} \text{Pr}\{ \neg D_n \;\big\vert\; \neg C \} \leq \delta$.  The use of an exponentially decaying parameter $\delta_i$ means that Alice faces a receding evaluation horizon as learning progresses, and so with high probability $\mathcal{A}_{G,E}$ will only ever observe the uniform random strategy $\phi^{\text{uni}}$.  Finally, we have that
\begin{align}
    \mathcal{R}^{\text{adapt}}_{G,E}(\mathcal{A}_{G,E}, \phi_{\mathcal{A}}) &= 1 - V_{G}(\mathcal{A}_{G,E}, \phi_{\mathcal{A}}) \\
    &\geq 1 - (1 - \gamma) - \gamma \left[\frac{1}{N} + \delta \right] \\
    &= \gamma \left[\frac{N - 1}{N} - \delta \right] \label{eqn:gamma_active}
\end{align}
where setting $\gamma = 1$ gives us the desired result.
\end{proof}

\subsection{Proof of Theorem~\ref{thm:impossibility}}
\label{general_case}

Most learning algorithms described in the literature fall into one of the two special classes described above.  It is entirely possible, however, to define a learning algorithm that can be both active and passive, and so we must address this general case.  Given a passive algorithm $\mathcal{A}$, and an active algorithm $\mathcal{A}'$, we can define a new algorithm which starts by randomly choosing whether to use $\mathcal{A}$ or $\mathcal{A}'$ for the rest of the interaction.  Such an algorithm could either converge or explore forever, each with probability $\frac{1}{2}$.

To lower-bound the regret of such \emph{mixed} learning strategies, we simply combine the partner strategies used for the active and passive cases.  Note that Equations~\ref{eqn:gamma_passive} and~\ref{eqn:gamma_active} both depend on the probability $\gamma$ that Alice's behavior fails to converge when played against the uniform random strategy $\phi^{\text{uni}}$.  Therefore, for any algorithm $\mathcal{A}$, we can construct a strategy $\phi$ for the $N \times N$ coordination game, such that
\begin{equation}
    \label{eqn:mixture}
    \mathcal{R}^{\text{adapt}}_{G,E}(\mathcal{A}_{G,E}, \phi) \! \geq \! \max \{ \! \frac{N \!-\! 2}{N} \! -\! \gamma \! - \! \delta, \gamma\! \left[\frac{N \!-\! 1}{N}\! - \!\delta \right] \! \}
\end{equation} 
We can simplify the result by replacing the second term with $\gamma \left[\frac{N - 2}{N} - \delta \right]$.  As the first term decreases linearly with $\gamma$, and (for $N \geq 3$ and $0 < \delta < \frac{N-2}{N}$) the second term increases linearly with $\gamma$, we can minimize Equation~\ref{eqn:mixture} over possible values of $\gamma$ by setting
\begin{equation}
    \gamma = \left[\frac{N-2}{N} -\delta \right]\left( 1 + \frac{N-2}{N} - \delta\right)^{-1}
\end{equation}
the value at which both bounds are equal.  Finally, plugging this value into Equation~\ref{eqn:mixture} we get
\begin{align}
    \mathcal{R}^{\text{adapt}}_{G,E}(\mathcal{A}_{G,E}, \phi) \! &\geq \! \left[ \frac{N \!\! - \! 2}{N} \! - \! \delta \right]^2 \!\!\! \left(\! 1 + \frac{N \!\! - \! 2}{N} \! - \! \delta \!\right)^{-1} \\
    &\geq \frac{1}{2} \left[\frac{N - 2}{N} - \delta \right]^2
\end{align}
which completes the proof. $\square$

\subsection{Open-Ended Regret}
\label{open_ended_regret}

Rather than explicitly requiring that Bob will forgive any initial mistakes Alice might make, a more general solution is to redefine the regret itself so that it does not penalize Alice for making such mistakes.  We can do this by defining Alice's regret for playing strategy $\pi$ instead of strategy $e \in E$  in terms of the expected average payoff that $e$ can guarantee starting from any history $h$.

\begin{definition}
    \label{def:open_ended_regret}
    In a repeated game $G$, if Bob follows strategy $\phi$, Alice's \emph{open-ended regret}\footnote{\citep{powers2005finite} suggest a similar notion of regret as an alternative to finite history-dependence constraints.} for playing strategy $\pi$ w.r.t. a set of experts strategies $E$ is:
    \begin{equation}
        \label{eqn:open_ended_regret}
        \mathcal{R}^{\text{oe}}_{G,E}(\pi, \phi) = \max_{e \in E}\inf_{h \in \mathcal{H}}V_{G}(e, \phi \vert h) - V_{G}(\pi, \phi)
    \end{equation}
\end{definition}

The first term is a lower bound on the expected payoff expert $e$ can receive if an adversary were to select Alice's actions for a finite number of initial stages.  If Bob's strategy $\phi$ can enter an unrecoverable state, the adversary will force him into this state before control is handed over to $e$.  This value is useful because it tells us how well Alice could eventually do once she has identified a good response to $\phi$, no matter how her initial exploratory behavior has affected Bob's behavior.  Note that $\mathcal{R}^{\text{oe}}_{G,\phi}(\pi, \phi)$ can be negative even when $V_{G}(e, \phi) \geq V_{G}(\pi, \phi)$, as the open-ended regret accounts for worst-case histories that may never be reached.  We now have the following corollary to Theorem~\ref{thm:impossibility}:

\begin{corollary}
    \label{cor:open_ended_regret}
    For any $N \geq 1$, there exists an $N \times N$ game $G$ and experts $E$ such that, for any learning algorithm $\mathcal{A}$, and any $\delta > 0$, we can construct a strategy $\phi$ such that
    \begin{equation}
        \label{eqn:corollary}
        \mathcal{R}^{\text{\emph{oe}}}_{G,E}(\mathcal{A}_{G,E}, \phi) \geq \frac{1}{2}\left[\frac{N-2}{N} - \delta \right]^2
    \end{equation}
\end{corollary}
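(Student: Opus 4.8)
The plan is to observe that for \emph{open-ended} partner strategies the open-ended regret of Definition~\ref{def:open_ended_regret} collapses onto the adaptive regret of Definition~\ref{def:adaptive_regret}, so that the corollary is an immediate consequence of Theorem~\ref{thm:impossibility}. Indeed, suppose $\phi$ is open-ended w.r.t. $E$ in $G$. Then by Definition~\ref{def:open_ended} there is, for each $e \in E$, a constant $\mu_e$ with $V_G(e, \phi \vert h) = \mu_e$ for \emph{every} history $h \in \mathcal{H}$; in particular $\inf_{h \in \mathcal{H}} V_G(e, \phi \vert h) = \mu_e = V_G(e, \phi \vert \emptyset) = V_G(e, \phi)$. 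Taking the maximum over $e \in E$ yields $\max_{e \in E} \inf_{h \in \mathcal{H}} V_G(e, \phi \vert h) = \max_{e \in E} V_G(e, \phi)$, and subtracting $V_G(\pi, \phi)$ from both sides gives $\mathcal{R}^{\text{oe}}_{G,E}(\pi, \phi) = \mathcal{R}^{\text{adapt}}_{G,E}(\pi, \phi)$ for any strategy $\pi$. So whenever the partner is open-ended, the $\inf_h$ term never hurts Alice, because no unrecoverable history is ever reachable in the first place.

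With this identity, the corollary follows directly. Fix $N \geq 3$ and let $G$, $E$ be the $N \times N$ coordination game and the set of fixed-action experts supplied by Theorem~\ref{thm:impossibility}. Given any learning algorithm $\mathcal{A}$ and any $0 < \delta < \frac{N-2}{N}$, that theorem produces a strategy $\phi$ which is open-ended w.r.t. $E$ and satisfies $\mathcal{R}^{\text{adapt}}_{G,E}(\mathcal{A}_{G,E}, \phi) \geq \frac{1}{2}\bigl[\frac{N-2}{N} - \delta\bigr]^2$. Instantiating the identity above with $\pi = \mathcal{A}_{G,E}$ immediately turns this into $\mathcal{R}^{\text{oe}}_{G,E}(\mathcal{A}_{G,E}, \phi) \geq \frac{1}{2}\bigl[\frac{N-2}{N} - \delta\bigr]^2$, which is exactly \eqref{eqn:corollary}. (For the small cases $N \le 2$, or $\delta \ge \frac{N-2}{N}$, the right-hand side is at most $\frac12\delta^2$ and the claim is essentially vacuous; the substantive content is the regime $N \geq 3$, $0 < \delta < \frac{N-2}{N}$ that matches Theorem~\ref{thm:impossibility}.)

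There is no real obstacle beyond pinning down the observation above: the only thing one must verify is that the strategies constructed in the proof of Theorem~\ref{thm:impossibility} are genuinely open-ended with a \emph{history-independent} $\mu_e$. This is already part of that theorem's conclusion, and it is transparent from the construction — in each of the passive ($\phi^{\tau}_e$), active ($\phi_{\mathcal{A}}$), and mixed cases, if a learner commits to a single expert $e$ from any history, the partner eventually matches $e$, so $V_G(e, \phi \vert h) = \mu_e = 1$ for all $h$. Hence the worst-case history in the $\inf_h$ term is always as good as the empty history, and the adaptive-regret lower bound transfers to the open-ended regret verbatim.
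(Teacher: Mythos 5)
Your proof is correct and follows essentially the same route as the paper's: observe that open-endedness makes $\inf_{h \in \mathcal{H}} V_G(e,\phi\vert h) = \mu_e$, so the open-ended regret coincides with the adaptive regret, and then invoke Theorem~\ref{thm:impossibility}. You are in fact slightly more careful than the paper in flagging that the corollary's quantifiers ($N \geq 1$, arbitrary $\delta > 0$) only carry substantive content in the regime $N \geq 3$, $0 < \delta < \frac{N-2}{N}$ inherited from the theorem.
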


\begin{proof}
Consider the game $G$, Bob's strategy $\phi$, and experts $E$ which satisfy Equation~\ref{eqn:impossibility} for $\mathcal{A}_{G,E}$ and some $\delta > 0$.  $\phi$ is open-ended w.r.t. $E$, and so for all experts $e \in E$, $V_{G}(e,\phi \vert h) = \mu_{e}$ for all $h \in \mathcal{H}$.  Therefore, for all $e \in E$
\begin{equation}
    \inf_{h \in \mathcal{H}}V_{G}(e, \phi \vert h) = \mu_{e}
\end{equation}
We then have that 
\begin{align}
    \mathcal{R}^{\text{oe}}_{G,E}(\mathcal{A}_{G,E}, \phi) &= \max_{e \in E} \mu_{e} - V_{G}(\mathcal{A}_{G},\phi) \\
    &= \mathcal{R}^{\text{adapt}}_{G,E}(\mathcal{A}_{G,E}, \phi) \\
    &\geq \frac{1-\delta}{2}\left[\frac{N-2}{N} - \delta \right]^2
\end{align}
for any $N \geq 1$, which is the desired result.
\end{proof}

\subsection{Computability}

A potential limitation of this these results is that when $\mathcal{A}$ is neither passive nor active almost surely, the predictive strategy $\phi_{\mathcal{A}}$ may not be Turing computable, even when $\mathcal{A}_{G,E}$ itself can be implemented as a probabilistic Turing machine~\cite{arora2009complexity}.  If $\mathcal{A}$ is active a.s., we could implement $\phi_{\mathcal{A}}$ by iteratively computing the distribution over paths of play over the next $n,n+1,n+2,\ldots$ stages until the probability that $\mathcal{A}$ has switched experts is greater than $1 - \delta_{\iota(s)}$.  When $\mathcal{A}$ can decide at random whether it is ever going to commit to an expert, from some histories the probability of deviation may never reach $1 - \delta_{\iota(s)}$, and the simulation process may never terminate.  Therefore, Theorem~\ref{thm:impossibility} only applies if we admit non-computable strategies.  It remains an open question whether there exists a computable learning algorithm $\mathcal{A}$ that can guarantee no adaptive regret against all computable, open-ended partner strategies.

\section{Rational Adaptation}
\label{discussion}

Our results highlight the fundamental limitation of approaching multi-agent learning as a non-stationary, single-agent learning problem.  Even when we can assume that Bob's computational constraints mean that his strategy must admit a bounded evaluation horizon, Alice's own computational constraints may prevent her from implementing a no-regret algorithm over a long-enough timescale to exploit this horizon.  We suggest instead constraining Bob's ability to adapt to Alice's behavior by requiring that any adaptive strategy be ``rational'' with respect to Bob's beliefs about Alice's strategy.  Here we briefly discuss two potential models of such \emph{rational adaptation}, and their implications for learning.

\paragraph{Computational Rationality}

As strategies which condition on Alice's behavior will generally be more difficult to implement than those that do not, we can imagine that Bob will only choose an adaptive strategy if he expects it to outperform any non-adaptive strategy.  We can formalize this idea by assuming that Alice and Bob's strategies are implemented by finite state machines (or Turing machines) $\mathcal{M}_{\pi}$ and $\mathcal{M}_{\phi}$, and replacing the repeated game with the \emph{machine game}~\cite{shoham2008multiagent} $G^{M}$, where $G^{M}(\mathcal{M}_{\pi}, \mathcal{M}_{\phi}) = V_G(\pi, \phi)$. Let $\rho$ be Bob's \emph{belief} over Alice's possible strategies.  We can say that $\phi$ is \emph{computationally rational} w.r.t. $\rho$ if for any $\phi' \neq \phi$, either $G^{M}(\rho, \mathcal{M}_{\phi}) \geq G^{M}(\rho, \mathcal{M}_{\phi'})$, or $G^{M}(\rho, \mathcal{M}_{\phi}) = G^{M}(\rho, \mathcal{M}_{\phi'})$ and $\vert \mathcal{M}_{\phi} \vert \leq \vert \mathcal{M}_{\phi'} \vert$, where
\begin{equation}
    G^{M}(\rho, \mathcal{M}_{\phi}) = \text{E}_{\mathcal{M}_{\pi} \sim \rho}[G(\mathcal{M}_{\pi}, \mathcal{M}_{\phi})]
\end{equation}
and $\vert \mathcal{M}_{\phi} \vert$ is the size of the machine that implements $\phi$.  $\phi$ is not only a best-response to $\rho$, but is the least complex strategy out of the best-responses to $\rho$.  Applied to Example~\ref{exp:grim_trigger}, if $\rho(a^1) = 1$, then the grim-trigger strategy $\phi_1$ is not computationally rational, as it is more complex than the strategy that always plays $b^1$, with greater payoff.  If $\rho(a^1) = \rho(a^2) = \frac{1}{2}$ however, the strategy $\phi$ which adapts to Alice's is rational, even if it is more complex than any stationary strategy.

\paragraph{Cognitive Hierarchies}

In place of, or in addition to computational constraints on Bob's strategy, we might also place constraints on Bob's beliefs about Alice's strategy.  Under the \emph{cognitive hierarchies} model~\cite{camerer2004hierarchies}, a level-$k$ rational strategy is a best-response to some belief over level-$0,\ldots,k$ strategies.  Applied to repeated games, level-$0$ strategies could consist of all stationary strategies, with level-$1$ strategies adapting their partner's stationary strategy, and higher-level strategies reasoning about their partner's adaptive behavior.   Rather than bounding the memory or evaluation horizon of Bob's strategy, we might instead bound the maximum level $k$ of his strategy in the cognitive hierarchy.   This assumption of \emph{hierarchical rationality} admits complex, adaptive strategies, while potentially ruling out pathological strategies which are not a best-response to any set of strategies Alice could plausibly follow.  We leave as a goal for future work the development of learning algorithms with regret guarantees under the assumption that Bob's strategy is hierarchically or computationally rational.

\section{Conclusion}

In this work we have shown that there is no algorithm which can be guaranteed to learn to cooperate with an arbitrary partner in a repeated game, even when this partner is always willing to forgive mistakes made by the learner.  By providing a more complete understanding of the limits of the subjective approach to multi-agent learning, this work serves as a starting point for the exploration of new theoretical frameworks that specifically account for the intentions behind other agents' adaptive behavior.  Ultimately, we hope this line of research will lead to novel algorithms that can reliably learn cooperate with humans and other artificial agents.  Most importantly, we hope to motivate the development of learning algorithms that are robust to both adaptive and non-adaptive partners, without relying on unrealistic assumptions about these agents.

\section*{Acknowledgements}

This research was (partially) funded by the Hybrid Intelligence Center, a 10-year programme
funded by the Dutch Ministry of Education, Culture and Science through the Netherlands
Organisation for Scientific Research, \href{https://hybrid-intelligence-centre.nl}{https://hybrid-intelligence-centre.nl}, grant number
024.004.022.

\newpage
\bibliography{references}

\begin{thebibliography}{28}
\providecommand{\natexlab}[1]{#1}
\providecommand{\url}[1]{\texttt{#1}}
\expandafter\ifx\csname urlstyle\endcsname\relax
  \providecommand{\doi}[1]{doi: #1}\else
  \providecommand{\doi}{doi: \begingroup \urlstyle{rm}\Url}\fi

\bibitem[Arora \& Barak(2009)Arora and Barak]{arora2009complexity}
Arora, S. and Barak, B.
\newblock \emph{Computational complexity: a modern approach}.
\newblock Cambridge University Press, 2009.

\bibitem[Bartle(2014)]{bartle2014integrals}
Bartle, R.~G.
\newblock \emph{The elements of integration and Lebesgue measure}.
\newblock John Wiley \& Sons, 2014.

\bibitem[Camerer et~al.(2004)Camerer, Ho, and Chong]{camerer2004hierarchies}
Camerer, C.~F., Ho, T.-H., and Chong, J.-K.
\newblock A cognitive hierarchy model of games.
\newblock \emph{The Quarterly Journal of Economics}, 119\penalty0 (3):\penalty0
  861--898, 2004.

\bibitem[Canaan et~al.(2019)Canaan, Togelius, Nealen, and
  Menzel]{canaan2019diverse}
Canaan, R., Togelius, J., Nealen, A., and Menzel, S.
\newblock Diverse agents for ad-hoc cooperation in {H}anabi.
\newblock In \emph{IEEE Conference on Games}, pp.\  1--8, 2019.

\bibitem[Carroll et~al.(2019)Carroll, Shah, Ho, Griffiths, Seshia, Abbeel, and
  Dragan]{carroll2019overcooked}
Carroll, M., Shah, R., Ho, M.~K., Griffiths, T., Seshia, S., Abbeel, P., and
  Dragan, A.
\newblock On the utility of learning about humans for human-{AI} coordination.
\newblock In \emph{Advances in Neural Information Processing Systems},
  volume~32, 2019.

\bibitem[Chang \& Kaelbling(2001)Chang and Kaelbling]{chang2001exploiter}
Chang, Y.-H. and Kaelbling, L.~P.
\newblock Playing is believing: The role of beliefs in multi-agent learning.
\newblock In \emph{Advances in Neural Information Processing Systems},
  volume~14, 2001.

\bibitem[Cui et~al.(2021)Cui, Hu, Pineda, and Foerster]{cui2021klevel}
Cui, B., Hu, H., Pineda, L., and Foerster, J.
\newblock K-level reasoning for zero-shot coordination in {H}anabi.
\newblock In \emph{Advances in Neural Information Processing Systems},
  volume~34, 2021.

\bibitem[de~Farias \& Megiddo(2003)de~Farias and Megiddo]{farias2003experts}
de~Farias, D.~P. and Megiddo, N.
\newblock How to combine expert (or novice) advice when actions impact the
  environment.
\newblock In \emph{Advances in Neural Information Processing Systems},
  volume~17, 2003.

\bibitem[Dekel et~al.(2012)Dekel, Tewari, and Arora]{dekel2012policy}
Dekel, O., Tewari, A., and Arora, R.
\newblock Online bandit learning against an adaptive adversary: from regret to
  policy regret.
\newblock In \emph{International Conference on Machine Learning}, 2012.

\bibitem[Even-Dar et~al.(2005)Even-Dar, Kakade, and Mansour]{even2005pomdp}
Even-Dar, E., Kakade, S.~M., and Mansour, Y.
\newblock Reinforcement learning in {POMDP}s without resets.
\newblock In \emph{The Nineteenth International Joint Conference on Artificial
  Intelligence}, pp.\  690--695, 2005.

\bibitem[Foerster et~al.(2019)Foerster, Song, Hughes, Burch, Dunning, Whiteson,
  Botvinick, and Bowling]{foerster2019bad}
Foerster, J., Song, F., Hughes, E., Burch, N., Dunning, I., Whiteson, S.,
  Botvinick, M., and Bowling, M.
\newblock Bayesian action decoder for deep multi-agent reinforcement learning.
\newblock In \emph{International Conference on Machine Learning}, pp.\
  1942--1951, 2019.

\bibitem[Genter et~al.(2011)Genter, Agmon, and Stone]{genter2011role}
Genter, K., Agmon, N., and Stone, P.
\newblock Role-based ad hoc teamwork.
\newblock In \emph{Workshops at the Twenty-Fifth AAAI Conference on Artificial
  Intelligence}, 2011.

\bibitem[Hannan(1957)]{hannan1957consistency}
Hannan, J.
\newblock Approximation to {B}ayes risk in repeated play.
\newblock \emph{Contributions to the Theory of Games}, 3\penalty0 (2):\penalty0
  97--140, 1957.

\bibitem[Hu \& Foerster(2020)Hu and Foerster]{hu2020Ssad}
Hu, H. and Foerster, J.~N.
\newblock Simplified action decoder for deep multi-agent reinforcement
  learning.
\newblock In \emph{International Conference on Learning Representations}, 2020.

\bibitem[Hu et~al.(2020)Hu, Lerer, Peysakhovich, and Foerster]{hu2020otherplay}
Hu, H., Lerer, A., Peysakhovich, A., and Foerster, J.
\newblock “other-play” for zero-shot coordination.
\newblock In \emph{International Conference on Machine Learning}, pp.\
  4399--4410, 2020.

\bibitem[Hu et~al.(2021)Hu, Lerer, Cui, Pineda, Brown, and Foerster]{hu2021obl}
Hu, H., Lerer, A., Cui, B., Pineda, L., Brown, N., and Foerster, J.
\newblock Off-belief learning.
\newblock In \emph{International Conference on Machine Learning}, pp.\
  4369--4379, 2021.

\bibitem[Kalai \& Lehrer(1993)Kalai and Lehrer]{kalai1993rational}
Kalai, E. and Lehrer, E.
\newblock Rational learning leads to {N}ash equilibrium.
\newblock \emph{Econometrica: Journal of the Econometric Society}, pp.\
  1019--1045, 1993.

\bibitem[Kallenberg(1997)]{kallenberg1997probability}
Kallenberg, O.
\newblock \emph{Foundations of modern probability}.
\newblock Springer, 1997.

\bibitem[Macindoe et~al.(2012)Macindoe, Kaelbling, and
  Lozano-P{\'e}rez]{macindoe2012pomcop}
Macindoe, O., Kaelbling, L.~P., and Lozano-P{\'e}rez, T.
\newblock {POMC}o{P}: Belief space planning for sidekicks in cooperative games.
\newblock In \emph{Artificial Intelligence and Interactive Digital
  Entertainment Conference}, 2012.

\bibitem[Mescheder et~al.(2011)Mescheder, Tuyls, and
  Kaisers]{mescheder2011pomdp}
Mescheder, D., Tuyls, K., and Kaisers, M.
\newblock Opponent modeling with {POMDP}s.
\newblock In \emph{The 23rd Belgium--Netherlands Conference on Artificial
  Intelligence}, pp.\  152--159, 2011.

\bibitem[Panella \& Gmytrasiewicz(2016)Panella and
  Gmytrasiewicz]{panella2016ipomdp}
Panella, A. and Gmytrasiewicz, P.
\newblock Bayesian learning of other agents' finite controllers for interactive
  {POMDP}s.
\newblock In \emph{The Thirtieth AAAI Conference on Artificial Intelligence},
  2016.

\bibitem[Poland \& Hutter(2005)Poland and Hutter]{poland2005universal}
Poland, J. and Hutter, M.
\newblock Defensive universal learning with experts.
\newblock In \emph{International Conference on Algorithmic Learning Theory},
  pp.\  356--370, 2005.

\bibitem[Poland \& Hutter(2006)Poland and Hutter]{poland2006aixi}
Poland, J. and Hutter, M.
\newblock Universal learning of repeated matrix games.
\newblock In \emph{The 15th Annual Machine Learning Conference of Belgium and
  The Netherlands}, 2006.

\bibitem[Powers \& Shoham(2004)Powers and Shoham]{powers2004targeted}
Powers, R. and Shoham, Y.
\newblock New criteria and a new algorithm for learning in multi-agent systems.
\newblock \emph{Advances in Neural Information Processing Systems}, 17, 2004.

\bibitem[Powers \& Shoham(2005)Powers and Shoham]{powers2005finite}
Powers, R. and Shoham, Y.
\newblock Learning against opponents with bounded memory.
\newblock In \emph{The Nineteenth International Joint Conference on Artificial
  Intelligence}, pp.\  817--822, 2005.

\bibitem[Robinson(1951)]{robinson1951fictitious}
Robinson, J.
\newblock An iterative method of solving a game.
\newblock \emph{Annals of mathematics}, 54\penalty0 (2):\penalty0 296--301,
  1951.

\bibitem[Shoham \& Leyton-Brown(2008)Shoham and
  Leyton-Brown]{shoham2008multiagent}
Shoham, Y. and Leyton-Brown, K.
\newblock \emph{Multiagent systems: Algorithmic, game-theoretic, and logical
  foundations}.
\newblock Cambridge University Press, 2008.

\bibitem[Strouse et~al.(2021)Strouse, McKee, Botvinick, Hughes, and
  Everett]{strouse2021collaborating}
Strouse, D., McKee, K., Botvinick, M., Hughes, E., and Everett, R.
\newblock Collaborating with humans without human data.
\newblock In \emph{Advances in Neural Information Processing Systems},
  volume~34, pp.\  14502--14515, 2021.

\end{thebibliography}
\bibliographystyle{icml2022}

\newpage
\appendix
\onecolumn

\section{Proof of Proposition~\ref{prop:open_ended_flexible}}
\label{apx:open_ended_flexible}

Let $r > \frac{1}{4}$, $c$ and $\mu_e$ (for all $e \in E$) be values satisfying Equation~\ref{eqn:flexibility} for Bob's strategy $\phi$ in the repeated game $G$.  For any expert strategy $e \in E$, history $h \in \mathcal{H}$, and integer $s$, we have that
\begin{align}
    \text{E}_{a,b \sim e,\phi}\left[\frac{1}{s}\sum^{\vert h \vert + s}_{n = \vert h \vert + 1} G(a_n, b_n) \; : \; h\right] - \mu_e &= \text{E}_{a,b \sim e,\phi}\left[\frac{1}{s}\sum^{\vert h \vert + s}_{n = \vert h \vert + 1} G(a_n, b_n) - \mu_e \; : \; h\right] \\
    &\leq \text{E}_{a,b \sim e,\phi}\left[\left\vert \frac{1}{s}\sum^{\vert h \vert + s}_{n = \vert h \vert + 1} G(a_n, b_n) - \mu_e \right\vert \; : \; h \right] \\
    &\leq c s^{-r} \label{eqn:flex_upper}
\end{align}
Reversing this argument, we also have that
\begin{align}
    \text{E}_{a,b \sim e,\phi}\left[\frac{1}{s}\sum^{\vert h \vert + s}_{n = \vert h \vert + 1} G(a_n, b_n) \; : \; h\right] - \mu_e &\geq -\text{E}_{a,b \sim e,\phi}\left[\left\vert \frac{1}{s}\sum^{\vert h \vert + s}_{n = \vert h \vert + 1} G(a_n, b_n) - \mu_e \right\vert \; : \; h \right] \\
    &\geq -c s^{-r} \label{eqn:flex_lower}
\end{align}
Finally, using Equation~\ref{eqn:flex_upper} we get
\begin{align}
    V_{G}(e,\phi \vert h) - \mu_e &= \liminf_{s \rightarrow \infty} \text{E}_{a,b \sim e,\phi}\left[\frac{1}{s}\sum^{\vert h \vert + s}_{n = \vert h \vert + 1} G(a_n, b_n) \; : \; h\right] - \mu_e \\
    &\leq \liminf_{s \rightarrow \infty} c s^{-r} = 0
\end{align}
while using Equation~\ref{eqn:flex_lower} gives us
\begin{align}
    V_{G}(e,\phi \vert h) - \mu_e &\geq \liminf_{s \rightarrow \infty} -c s^{-r} = 0
\end{align}
Therefore, $V_{G}(e,\phi \vert h) - \mu_e = 0$ for any $e \in E$, and so $\phi$ is also open-ended with respect to $E$. $\square$

\section{Proof of Proposition~\ref{prop:fictitious_play_open_ended}}
\label{apx:fictitious_play_open_ended} 

Let $E$ be the set of $\vert A \vert$ fixed-action experts in $G$.  Let $\alpha_s$ be the empirical strategy up stage $s$, defined as $\alpha_s(a) = \frac{1}{s}\sum^{s}_{n=1}1_{\{ a_n = a \}}$.  For each expert $e \in E$, let $\mu_e = \max_{b \in B}G(e, b)$. Let $\phi_{\text{fp}}$ be the strategy that implements fictitious play.  We need to show that, for each $e \in E$, and for each history $h \in \mathcal{H}$, we have $\mu_e - V_G(e, \phi^{\text{fp}} \vert h) = 0$.  The main issue is showing that, for any initial history $h$, when Alice commits to expert $e$, Bob's fictitious play strategy will eventually play a best-response to $e$.  We define the $B(e) = \argmax_{b}G(e, b)$, and then define the minimum regret $\epsilon(e)$ as
\begin{equation}
    \epsilon(e) = \min_{b \not\in B(e)}\left[ \mu_e - G(e, b)\right]
\end{equation}
that is, $\epsilon(e)$ is the minimum regret over all of Bob's sub-optimal pure-strategy responses to $e$.  Assume that from stage $n = \vert h\vert$ Alice follows expert $e$ for all subsequent stages.  Then for any $s$ we have
\begin{align}
    \alpha_{\vert h \vert + s}(e) &= \frac{1}{\vert h \vert + s}\left[ s + \sum_{n=1}^{\vert h \vert}1_{\{a_n = e\}} \right] \\
    &\geq \frac{s}{\vert h \vert + s}
\end{align}
In stage $s > 1$, fictitious play plays Bob's best response to $\alpha_{s-1}$.  Therefore, for Bob to select a sub-optimal action $b \not\in B(e)$ in stage $s+1$, it will need to be the case that
\begin{equation}
    G(\alpha_{s}, b') - G(\alpha_{s}, b) \leq 0
\end{equation}
for all $b' \in B(e)$, where $G(\alpha_{s}, b)$ is shorthand for $\sum_{a}\alpha_{s}(a)G(a, b)$.  Starting from an initial history $h$, let Alice follow expert $e$ for $s = \left\lceil\frac{\vert H \vert}{\epsilon(e)}\right\rceil + 1$ stages.  We then have that
\begin{align}
    G(\alpha_{\vert h \vert + s}, b') - G(\alpha_{\vert h \vert + s}, b) &= \alpha_{\vert h \vert + s}(e)\left[G(e, b') - G(e, b)\right] + \sum_{a \neq e}\alpha_{\vert h \vert + s}(a)\left[G(a, b') - G(a, b)\right] \\
    &\geq \alpha_{\vert h \vert + s}(e)\epsilon(e) - \left[ 1 - \alpha_{\vert h \vert + s}\right] \\
    &\geq (1 + \epsilon(e)) \frac{s}{\vert h \vert + s} - 1 \\
    &= (1 + \epsilon(e)) \frac{\left\lceil\frac{\vert H \vert}{\epsilon(e)}\right\rceil + 1}{\vert h \vert + \left\lceil\frac{\vert H \vert}{\epsilon(e)}\right\rceil + 1} - 1 \\
    &> (1 + \epsilon(e)) \frac{\frac{\vert h\vert}{\epsilon(e)}}{\vert h \vert + \frac{\vert h\vert}{\epsilon(e)}} - 1 = 0 
\end{align}
This means that after Alice follows $e$ for $s = \left\lceil\frac{\vert H \vert}{\epsilon(e)}\right\rceil + 1$ steps, Bob's best response to $\alpha_{\vert h \vert + s}$ cannot be a suboptimal response to $e$, simply because $e$ will have too much mass under the resulting empirical distribution.  Bob will switch to some $b \in B(e)$ after no more than $s$ steps, and will never change back so long as Alice continues to play $e$.  Therefore, starting from any history $h \in \mathcal{H}$, for any $e \in E$, we have that $V_G(e, \phi^{\text{fp}}) = \mu_e$, concluding the proof. $\square$

\section{Proof of Lemma~\ref{lem:tau_delta}}
\label{apx:tau_delta}

Define the random variables $c_s$ on $(\Omega, \mathcal{F})$ as
\begin{equation}
    \label{eqn:convergence_indicator}
    c_s(\omega) = \left\{
    \begin{array}{ll}
        0, & \exists t > s, a_t(\omega) \neq a_s(\omega) \\
        1, & \text{otherwise}
    \end{array}\right.
\end{equation}
For each stage $s$, the variable $c_s$ indicates whether Alice has converged to a fixed action by this stage.  We need to ensure that each $c_s : \Omega \mapsto \Re$ is measurable w.r.t. $\mathcal{F}$.  We also need to show that the $\lim_{s \rightarrow \infty} c_s$ is measurable.  To see this, let $C_{s,t} \in \mathcal{F}$ be the event that Alice takes a fixed action from stage $s$ to at least stage $t$.  We have that $\{\omega \in \Omega : c_s(\omega) = 1\} = \bigcap^{\infty}_{t=s} C_{s,t}$, and $\{\omega \in \Omega : \lim_{s \rightarrow \infty} c_s(\omega) = 1\} = \bigcup^{\infty}_{s=1} \{\omega \in \Omega : c_s(\omega) = 1\}$.  These are countable intersections and unions, and so $\{\omega \in \Omega : c_s(\omega) = 1\} \in \mathcal{F}$ and $\{\omega \in \Omega : \lim_{s \rightarrow \infty} c_s(\omega) = 1\} \in \mathcal{F}$.  Therefore, each $c_s$, as well as $\lim_{s \rightarrow \infty} c_s$, is measurable w.r.t. $\mathcal{F}$.

Note that because $c_s \in \{0,1\}$, the indicator functions $1_{\{c_s = 0\}} = 1 - c_s$, and the indicator function $1_{\{\lim_{s \rightarrow \infty} c_s = 0 \}} = 1 - \lim_{s \rightarrow \infty} c_s$.  We then have that
\begin{align}
    \gamma &= \text{Pr}\left\{ \forall s, \exists t > s, a_t \neq a_s \right\} \\
    &= \text{Pr}\left\{ \lim_{s \rightarrow \infty} c_s = 0 \right\} \\
    &= \text{E}\left[1_{\{\lim_{s \rightarrow \infty} c_s = 0\}}\right] \\
    &= \text{E}\left[1 - \lim_{s \rightarrow \infty} c_s \right] \\
    &= \text{E}\left[\lim_{s \rightarrow \infty} 1 - c_s \right] \\
    &= \lim_{s \rightarrow \infty} \text{E}\left[1 - c_s \right] \\
    &= \lim_{s \rightarrow \infty} \text{E}\left[ 1_{\{c_s = 0\}} \right] \\
    &= \lim_{s \rightarrow \infty} \text{Pr}\left\{ c_s = 0 \right\}.
\end{align}
Since $c_s$ is bounded, we can exchange the limit and expectation by the dominated convergence theorem~\cite{bartle2014integrals}.  This means that for any $\delta > 0$ there exists $\tau$ such that
\begin{equation}
    \text{Pr}\left\{ c_\tau = 0 \right\} \leq \gamma + \delta
\end{equation}
Since $\{c_s = 0\} \Leftrightarrow \{\exists t > s, a_t \neq a_s\}$, this implies that the probability that learning has not converged by stage $\tau$ is less than to $\gamma + \delta$, and so this $\tau$ satisfies Equation~\ref{eqn:tau_delta} for the chosen value of $\delta$.  $\square$

\section{Proof of Lemma~\ref{lem:sigma_delta}}
\label{apx:sigma_delta}

The proof is analogous to that of Lemma~\ref{lem:tau_delta}.  First, recall from the previous proof that $\{\omega \in \Omega : \lim_{s \rightarrow \infty} c_s(\omega) = 1\} = C$ is measurable w.r.t. $\mathcal{F}$. 
We then note that for any $h \in \mathcal{H}$ with $\text{Pr}\left\{ h \right\} > 0$,
\begin{equation}
\text{Pr}\left\{\forall s > \vert h \vert, \exists t > s, a_t \neq a_s \;\big\vert\; h \wedge \neg C \right\} = 1.
\end{equation}
which comes from the definition of $\neg C$.  For all $t \in \mathbb{N}$, we define a sequence of random variables $d^{t}_s$ on $(\Omega, \mathcal{F})$ such that
\begin{equation}
    d^{t}_s(\omega) =
    \begin{cases}
        0 & \forall n \in [t, t + s], a_t(\omega) = a_n(\omega) \\
        1 & \text{otherwise}
    \end{cases}
\end{equation}
where $d^{t}_s = 1$ indicates that Alice will switch actions within the next $s$ stages after stage $t$.  We then have that
\begin{align} 
    1 &= \text{Pr}\left\{\forall s > \vert h \vert, \exists t > s, a_t \neq a_s \;\big\vert\; h \wedge \neg C \right\} \\
    &=\text{Pr}\left\{ \lim_{s \rightarrow \infty} d^{\vert h \vert}_s = 1 \;\big\vert\; h \wedge \neg C \right\} \\
    &= \lim_{s \rightarrow \infty} \text{Pr}\left\{ d^{\vert h \vert}_s = 1 \;\big\vert\; h \wedge \neg C \right\}
\end{align}
where once again we use the fact that each $d^{\vert h \vert}_s$ and $\lim_{s \rightarrow \infty} d^{\vert h \vert}_s$ are bounded, combined with the dominated convergence theorem.  This means that for any $\delta > 0$, and any $h \in \mathcal{H}$ with $\text{Pr}\left\{ h \right\} > 0$, there exists $\sigma$ such that
\begin{equation}
    \text{Pr}\left\{ d^{\vert h \vert}_{\sigma} = 1 \;\big\vert\; h \wedge \neg C \right\} > 1 - \delta
\end{equation}
Since $\{ d^{t}_s \!\! = \!\! 1 \} \! \Leftrightarrow \! \{\exists n \!\in\! [t,t+s], a_t \!\!\neq\!\! a_n\}$, this $\sigma$ must satisfy Equation~\ref{eqn:sigma_delta} for the chosen value of $\delta$. $\square$


\end{document}